\documentclass[lettersize,journal]{IEEEtran}
\usepackage{amsmath,amsfonts}
\usepackage{algorithm}
\usepackage{algpseudocode}
\usepackage{array}
\usepackage{subcaption}
\usepackage{textcomp}
\usepackage{stfloats}
\usepackage{url}
\usepackage{verbatim}
\usepackage{graphicx}
\usepackage{cite}
\usepackage{soul}
\usepackage{xcolor}
\usepackage{tcolorbox}
\usepackage{afterpage}
\usepackage{makecell}
\usepackage{threeparttable}
\usepackage{bm}
\usepackage{dsfont}
\usepackage{xcolor}
\usepackage{graphicx}
\usepackage{mathrsfs}  

\usepackage{amsthm}
\usepackage{tabularx}
\usepackage{amssymb}
\usepackage{enumitem}

\theoremstyle{plain}
\newtheorem{theorem}{Theorem}

\newtheorem{proposition}{Proposition}

\theoremstyle{definition}
\newtheorem{assumption}{Assumption}

\theoremstyle{remark}

\usepackage{mathrsfs}  

\newtheorem{definition}{Definition}

\newtcolorbox{highlighted}{colback=yellow!50,colframe=yellow!50,arc=0pt,outer arc=0pt}

\begin{document}

\title{Semantic-Aware Update Admission and Adaptive Byzantine Redundancy for Secure Computation Offloading in COIN-Enabled Metaverse Systems}

\author{Ibrahim Aliyu,~\IEEEmembership{Member,~IEEE}, Jaehyung Park, and Jinsul Kim~\IEEEmembership{Member,~IEEE}
\thanks{This work was supported in part by the Institute of
Information \& Communications Technology Planning \& Evaluation
(IITP)--ITRC (Information Technology Research Center) grant
funded by the Korea government (MSIT)
(IITP-2026-RS-2024-00437718, 50\%)
and in part by the IITP grant funded by the Korea government
(MSIT) (RS-2025-00345030, Development of digital twin-based
network failure prevention and operation management automation
technology, 50\%).
\textit{Corresponding author: Jinsul Kim
(email: jsworld@jnu.ac.kr).}}%
\thanks{Ibrahim Aliyu and Jinsul Kim are with the
Intelligent Innovation G5-AICT Research Center,
Department of Intelligent Electronics and Computer Engineering,
Chonnam National University, Gwangju 61186,
Republic of Korea.}
\thanks{Jaehyung Park is with the
Department of Intelligent Electronics and Computer Engineering,
Chonnam National University, Gwangju 61186,
Republic of Korea}
}



\maketitle

\begin{abstract}
Computing in the network (COIN) supports latency-sensitive
Metaverse services, but blockchain commitment of persistent shared-state updates competes with application execution for limited resources.  We consider the joint
problem of deciding which updates to commit, how much
Byzantine redundancy to provision, and where to execute
Metaverse application tasks. We formulate this problem
under resource, latency, and practical Byzantine fault
tolerance (PBFT) constraints, accounting for update
persistence and semantic significance. We develop a hierarchical redundancy-aware PBFT
(R-PBFT) approach that decomposes the decision process
into slot-wise computation offloading and sequential
update-admission and redundancy control. We formulate
the former as an identical-interest game and address
the latter using a feasibility-filtered double deep Q-network (DDQN). Analytical downlink allocation
accounts for the energy and delivery-feasibility
effects of consensus latency.
For fixed slot state and redundancy, we establish the
game's exact-potential property and finite termination
of strict sequential improvements. Feasible-action
filtering enforces configured Byzantine redundancy and
blockchain-budget constraints independently of learning.
Numerical results show relative improvements of up to
$46.7\%$ in reward averaged across tested user loads
and seeds over Minimum Secure, the strongest baseline
on average. At the highest tested Byzantine bound, the secure timely-update ratio improves by $62.3\%$ over the same baseline. Ablations assess the contributions of
feasibility filtering and adaptive deferral.
\end{abstract}

\begin{IEEEkeywords}
Blockchain, computation offloading, deep
reinforcement learning, game theory, in-network computing,
Metaverse.
\end{IEEEkeywords}

\section{Introduction}
\label{sec:introduction}
\IEEEPARstart{E}{xtended} reality (XR) applications integrate latency-sensitive functions—such as tracking, simultaneous localization and mapping (SLAM), object detection, gesture processing, rendering, and multimedia encoding—each with unique computation requirements and dependencies. These tasks are efficiently represented as directed acyclic graphs (DAGs) and distributed across user and edge resources\cite{alriksson2021xr,yu2024dependent}. In multi-user XR, maintaining consistent shared spatial and scene information highlights the value of computing in the network (COIN), which brings computation closer to users through distributed resources\cite{chen2023networked,huang2024scaxr,kunze2025coin}.

However, distributed XR processing introduces integrity challenges: persistent shared-state updates affect multiple XR users
and downstream computations, making their manipulation
a significant integrity
threat~\cite{slocum2024sharedstate}.
Byzantine fault tolerance (BFT) maintains agreement on accepted updates with
$r\ge3\bar z+1$ validators under its protocol
assumptions~\cite{lamport2019byzantine,castro1999pbft}. However, input validity requires application-level
checks, while redundant validation increases resource
and latency overhead.

While recent BFT mechanisms adapt consensus to changing
network and infrastructure
conditions~\cite{ye2024labft,ling2024stable,leduc2025draftee},
how shared-state persistence and semantic significance
should jointly determine update commitment,
security-feasible redundancy provisioning, and XR computation
offloading remains an open question (see Table~\ref{tab:related_comparison}). Addressing this
coupling requires balancing state freshness, consensus
overhead, and application latency under a shared resource
budget. 

We formulate the joint optimization of persistent-update
commitment, blockchain redundancy, and XR computation
offloading under shared-resource and PBFT feasibility
constraints. We develop a hierarchical redundancy-aware
PBFT (R-PBFT) approach that decomposes the decision process
into slot-wise offloading and sequential update-admission
and redundancy control. The former is formulated as an
identical-interest game; the latter is addressed using a
feasibility-filtered double deep Q-network (DDQN).

The main contributions are summarized as follows:
\begin{itemize}
\item
We develop a function-level XR DAG model distinguishing
user-specific, shared-intermediate, and persistent
authoritative state, enabling selective blockchain
protection without validating transient outputs.

\item
We introduce persistence-aware commitment and deferral
with adaptive R-PBFT redundancy under a shared resource
budget. Every admitted update satisfies
$r_p^{(t)}\ge3\bar z_p^{(t)}+1$, using the configured
Byzantine-fault bound.

\item
We develop a hierarchical solution using an identical-interest
offloading game and a DDQN admission--redundancy controller.
Joint feasible-action selection enforces configured Byzantine
and budget constraints during action selection and target
construction. Analytical downlink allocation accounts for
consensus latency in delivery energy and feasibility.
For fixed slot state and BRF, we establish the game's
exact-potential property and finite termination of
strict improvements.

\item
We evaluate the framework across Byzantine severity,
user load, shared-state validity, arrival intensity,
and computing capacity, supported by learning diagnostics,
ablations, and statistical analyses.
\end{itemize}

\begin{table}[t]
\centering
\caption{Comparison of Related Research Directions}
\label{tab:related_comparison}
\renewcommand{\arraystretch}{1.08}
\setlength{\tabcolsep}{3pt}
\footnotesize

\resizebox{\columnwidth}{!}{%
\begin{tabular}{|l|c|c|c|c|c|}
\hline
\textbf{Research Direction}
&
\makecell{\rotatebox{90}{XR DAG}}
&
\makecell{\rotatebox{90}{Shared-State Semantics}}
&
\makecell{\rotatebox{90}{Adaptive BFT}}
&
\makecell{\rotatebox{90}{Hard BFT Feasibility}}
&
\makecell{\rotatebox{90}{BC--Compute Coupling}}
\\
\hline

XR DAG offloading
\cite{yu2024dependent}
&
$\checkmark$
&
$\times$
&
$\times$
&
$\times$
&
$\times$
\\
\hline

Multi-user XR and shared-state
\cite{chen2023networked,huang2024scaxr,slocum2024sharedstate}
&
$\triangle$
&
$\checkmark$
&
$\times$
&
$\times$
&
$\times$
\\
\hline

Adaptive BFT
\cite{ye2024labft,ling2024stable,leduc2025draftee}
&
$\times$
&
$\times$
&
$\checkmark$
&
$\triangle$
&
$\times$
\\
\hline

Blockchain-assisted edge offloading
\cite{fan2024resource,fang2024task,du2025secure,
chen2025blockchain}
&
$\triangle$
&
$\times$
&
$\triangle$
&
$\triangle$
&
$\checkmark$
\\
\hline

\textbf{Proposed R-PBFT}
&
$\checkmark$
&
$\checkmark$
&
$\checkmark$
&
$\checkmark$
&
$\checkmark$
\\
\hline

\end{tabular}%
}

\vspace{1mm}
\parbox{\columnwidth}{%
\scriptsize
\textit{Note:}
$\checkmark$: explicitly addressed;
$\triangle$: partially addressed or present in some
cited works; $\times$: not explicitly addressed in
the cited works.
}

\end{table}

The paper is organized as follows. Sections~\ref{sec:system_model}--\ref{sec:security_problem}
present the system model and security-constrained formulation.
Section~\ref{sec:proposed_solution} develops the proposed solution,
and Section~\ref{sec:numerical} evaluates its performance.
Section~\ref{sec:related_work} discusses related work,
and Section~\ref{sec:conclusion} concludes the paper.

\begin{figure*}[!t]
    \centering
    \includegraphics[
    width=0.9\textwidth,
    height=0.5 \textheight,
    keepaspectratio]
    {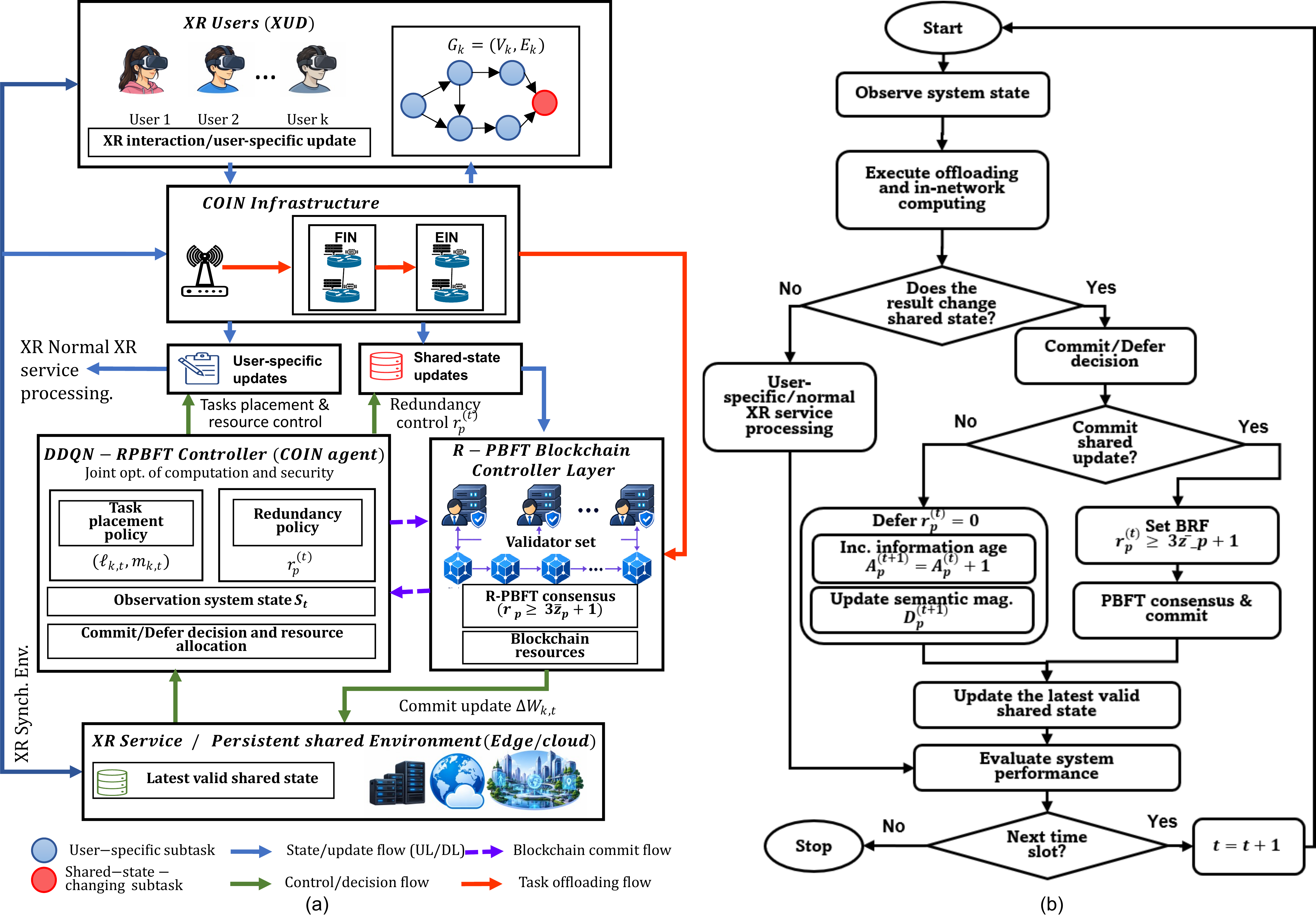}
   \caption{DDQN-RPBFT framework:
(a) COIN-enabled XR architecture with persistent shared-state
protection; (b) Per-slot operation comprising offloading,
commit/defer decisions, adaptive BRF selection, and
R-PBFT commitment.}
    \label{fig:system_architecture}
\end{figure*}

\section{System Model}
\label{sec:system_model}

\subsection{Network Architecture}
\label{subsec:network_architecture}
We consider a COIN-enabled Metaverse system comprising
users $\mathcal K=\{1,\ldots,K\}$, fog in-network
computing nodes (FINs) $\mathcal N^{\mathrm F}$,
edge in-network computing nodes (EINs)
$\mathcal N^{\mathrm E}$, permissioned validators
$\mathcal V$, and a logically centralized controller.
FINs and EINs comprise programmable network devices
and associated computing hosts~\cite{kunze2025coin}.
Execution resources support their assigned functions;
validators require persistent storage.

Time is divided into slots of duration $\Delta t$.
Each slot, the controller selects a blockchain redundancy
factor (BRF) profile using task demand, shared-state age
and magnitude, configured fault bounds, and available
resources. The profile remains fixed during offloading
and channel selection.

\subsection{Task Model}
\label{subsec:task_model}

User $k$ requests at most one XR task per slot:
$\mu_k^{(t)}\in\{0\}\cup\mathcal F$, where $0$ denotes
inactivity and $\mathcal F=\{1,\ldots,F\}$ contains task
types. Active users form
$\mathcal K_t=\{k:\mu_k^{(t)}\neq0\}$.
Task $f$ is a DAG $\mathcal G_f=(\mathcal J_f,\mathcal E_f)$
of functions and dependencies.

Each function has attributes
$\mathcal Q_{f,j}=\langle I_{f,j},C_{f,j},O_{f,j},
a_{f,j}^{\mathrm{sh}},a_{f,j}^{\mathrm{per}}\rangle$,
where $I_{f,j}$ and $O_{f,j}$ are input/output sizes
in bytes, $C_{f,j}$ is computation demand in cycles,
and the binary indicators identify shared processing
and persistent-update generation.
Only persistent authoritative updates are eligible for
blockchain commitment.

Relative to DAG execution start, function times satisfy
\begin{align}
T_{k,j}^{\mathrm{st}}(t)
&=\max_{i\in\operatorname{pred}(j)}
\left\{T_{k,i}^{\mathrm{fn}}(t)
+T_{k,i\rightarrow j}^{\mathrm{tr}}(t)\right\},
\label{eq:function_start}\\
T_{k,j}^{\mathrm{fn}}(t)
&=T_{k,j}^{\mathrm{st}}(t)+T_{k,j}^{\mathrm{exe}}(t),
\label{eq:function_finish}
\end{align}
where $\operatorname{pred}(j)$ denotes the immediate
predecessors of $j$; $T^{\mathrm{st}}$, $T^{\mathrm{fn}}$,
$T^{\mathrm{tr}}$, and $T^{\mathrm{exe}}$ denote start
time, finish time, inter-function transfer delay, and
execution duration, respectively. Source functions
start at zero; resource waiting is accounted for separately. The DAG latency is
\begin{equation}
T_{k,f}^{\mathrm{DAG}}(t)
=\max_{j\in\mathcal J_f}T_{k,j}^{\mathrm{fn}}(t).
\label{eq:dag_completion}
\end{equation}

\subsection{Blockchain Model}
\label{subsec:blockchain_model}

Let $\mathcal P$ contain persistent shared functions. The pending
update associated with function $p$ is characterized by $\mathcal U_p^{(t)}
=
\left\langle
W_p,A_p^{(t)},D_p^{(t)},\chi_p,\bar z_p^{(t)}
\right\rangle,$ where $W_p$ is its size, $A_p^{(t)}$ is the shared-state age
proxy, $D_p^{(t)}\in[0,1]$ is the pending semantic magnitude,
$\chi_p>0$ is criticality, and $\bar z_p^{(t)}$ is the configured
Byzantine-fault bound.

The affected-user set and active-update set are
$\mathcal K_p^{(t)}
=\{k\in\mathcal K_t:p\in\mathcal J_{\mu_k^{(t)}}\}$ and
$\mathcal P_t^{\mathrm{act}}
=\{p:|\mathcal K_p^{(t)}|>0\}$, respectively, where $N_p^{(t)}=|\mathcal K_p^{(t)}|$
is the number of affected users.

The controller selects the BRF profile
$\mathbf r_t=(r_p^{(t)})_{p\in\mathcal P}$, with
\begin{equation}
r_p^{(t)}\in\mathcal R_p^{(t)}
=\{0\}\cup
\{r\in\mathbb Z:
3\bar z_p^{(t)}+1\le r\le r_{\max}\},
\label{eq:brf_action_set}
\end{equation}
where $r_{\max}\le|\mathcal V|$ and
$r_{p,\min}^{(t)}=3\bar z_p^{(t)}+1$.
Zero denotes deferral; a positive action admits a
commitment attempt using $r_p^{(t)}$ validators.
If the positive-action set is empty, only deferral
is available.

The slot-level transition model assumes successful commitment
for every admitted security-feasible update; timely delivery
to users is evaluated separately. With
$c_p^{(t)}=\mathds{1}\{r_p^{(t)}>0\}$,
\begin{align}
A_p^{(t+1)}
&=(1-c_p^{(t)})(A_p^{(t)}+\Delta t),
\label{eq:aoi_evolution}\\
D_p^{(t+1)}
&=
\begin{cases}
\varrho_pD_p^{(t)}, & c_p^{(t)}=1,\\
\min\{1,D_p^{(t)}+\xi_p^{(t)}\}, & c_p^{(t)}=0,
\end{cases}
\label{eq:magnitude_evolution}
\end{align}
where $\varrho_p\in[0,1)$ and $\xi_p^{(t)}\ge0$ describe
residual magnitude and accumulated changes. The zero reset abstracts generation-to-commit delay;
$A_p^{(t)}$ is an age proxy rather than continuous-time
receiver Age of Information~\cite{yates2021age}. This transition
assumption is distinct from the consensus safety guarantee.

\subsection{Communication Model}
\label{subsec:communication_model}

\subsubsection{Uplink Transmission}
User $k$ selects $s_k^{(t)}\in\{0\}\cup\mathcal M$,
where $\mathcal M=\{1,\ldots,M\}$; zero denotes local
execution and $m>0$ denotes offloading through channel $m$.
Let $\mathbf s_t=(s_k^{(t)})_{k\in\mathcal K_t}$,
$s_{k,m}^{(t)}=\mathds{1}\{s_k^{(t)}=m\}$, and
$\boldsymbol{\mathcal S}_t$ collect these indicators. The uplink rate is
\begin{equation}
R_{k,m}^{\mathrm{UL}}(t)
=
\frac{B^{\mathrm{UL}}}{M}
\log_2
\left(
1+
\frac{P_k^{\mathrm{UL}}h_k^{\mathrm{UL}}(t)}
{\sigma_{\mathrm{UL}}^2+
\sum_{\substack{n\in\mathcal K_t\setminus\{k\}\\s_n^{(t)}=m}}
P_n^{\mathrm{UL}}h_n^{\mathrm{UL}}(t)}
\right),
\label{eq:uplink_rate}
\end{equation}
where $B^{\mathrm{UL}}$, $P_k^{\mathrm{UL}}$,
$h_k^{\mathrm{UL}}$, and $\sigma_{\mathrm{UL}}^2$ denote
bandwidth, transmit power, channel gain, and noise power.
For remote execution of $f=\mu_k^{(t)}$, the uplink
latency is
$T_{k,f}^{\mathrm{UL}}(t)
=I_f/R_{k,s_k^{(t)}}^{\mathrm{UL}}(t)$,
and the transmit energy is
$E_{k,f}^{\mathrm{UL}}(t)
=P_k^{\mathrm{UL}}T_{k,f}^{\mathrm{UL}}(t)$.
Both are zero for local execution.

\subsubsection{Blockchain Communication}

For a committee of size $r$, let
$\mathcal H=\{\mathrm{br},\mathrm{pp},\mathrm{pr},
\mathrm{cm},\mathrm{rp}\}$ denote dissemination,
pre-prepare, prepare, commit, and reply stages, respectively.
Let $W_p^h$ and $D_p^h(r)$ denote the message size
and aggregate communication volume of stage $h$.
The modeled volumes are
\begin{equation}
\begin{aligned}
D_p^{\mathrm{pp}}(r)&=(r-1)W_p^{\mathrm{pp}},
& D_p^{\mathrm{rp}}(r)&=rW_p^{\mathrm{rp}},\\
D_p^h(r)&=r(r-1)W_p^h,
& h&\in\{\mathrm{pr},\mathrm{cm}\}.
\end{aligned}
\label{eq:pbft_message_volumes}
\end{equation}
Dissemination volume $D_p^{\mathrm{br}}(r)$ excludes
traffic counted in pre-prepare. Prepare traffic uses
an all-validator approximation, not a certificate threshold.

With effective aggregate rate $R^{\mathrm{BC}}$,
processing capacity $F^{\mathrm{BC}}$, and stage
workload $C_p^h(r)$, the nominal consensus latency is
\begin{equation}
T_p^{\mathrm{BC}}(r)
=\sum_{h\in\mathcal H}
\left[
\frac{D_p^h(r)}{R^{\mathrm{BC}}}
+\frac{C_p^h(r)}{F^{\mathrm{BC}}}
\right].
\label{eq:total_pbft_latency}
\end{equation}
Total communication and replicated storage are
$D_p^{\mathrm{BC}}(r)=\sum_{h\in\mathcal H}D_p^h(r)$
and $S_p^{\mathrm{BC}}(r)=rW_p$, respectively.
The model accounts for nominal consensus operation,
excluding view-change and committee-reconfiguration overhead.

\subsubsection{Downlink Transmission}

For bandwidth $B_k^{\mathrm{DL}}$ assigned to user $k$,
\begin{equation}
R_k^{\mathrm{DL}}(t)
=
B_k^{\mathrm{DL}}
\log_2
\left(
1+
\frac{P_k^{\mathrm{DL}}(t)h_k^{\mathrm{DL}}(t)}
{I_k^{\mathrm{DL}}(t)+\sigma_{\mathrm{DL}}^2}
\right),
\label{eq:downlink_rate}
\end{equation}
where $I_k^{\mathrm{DL}}$ is background interference.
The delivery latency is
$T_{k,f}^{\mathrm{DL}}(t)
=O_{k,f}^{\mathrm{DL}}/R_k^{\mathrm{DL}}(t)$,
with transmitter-side energy
$E_{k,f}^{\mathrm{DL}}(t)
=P_k^{\mathrm{DL}}(t)T_{k,f}^{\mathrm{DL}}(t)$.
Downlink power is recomputed for each candidate offloading
profile using the remaining service time, as specified in
Section~\ref{subsubsec:downlink_power_allocation}.

\subsection{Computation and Service Model}
\label{subsec:computation_model}

\subsubsection{Local Computing}

For local processor frequency $F_k^{\mathrm L}(t)$,
the execution latency and energy are:
\begin{align}
T_{k,f}^{\mathrm L}(t)
&=T_{k,f}^{\mathrm{DAG}}\bigl(F_k^{\mathrm L}(t)\bigr),
\label{eq:local_latency}\\
E_{k,f}^{\mathrm L}(t)
&=\kappa_kC_f\bigl(F_k^{\mathrm L}(t)\bigr)^2,
\label{eq:local_energy}
\end{align}
where $\kappa_k$ is the switched-capacitance coefficient and
$C_f=\sum_{j\in\mathcal J_f}C_{f,j}$.

\subsubsection{Remote Computing}

For execution at $\ell\in\{\mathrm{FIN},\mathrm{EIN}\}$,
as determined by the placement rule, the latency
excluding blockchain validation is
\begin{equation}
T_{k,f}^{\ell}(t)
=
T_{k,f}^{\mathrm{UL}}(t)
+T_{k,f}^{q,\ell}(t)
+T_{k,f}^{\mathrm{DAG},\ell}(t)
+T_{k,f}^{\mathrm{DL}}(t).
\label{eq:remote_execution_latency}
\end{equation}

\subsubsection{Service latency and user energy}
Let $\nu_{k,p}^{(t)}=\mathds{1}\{k\in\mathcal K_p^{(t)}\}$.
Under the modeled serialized validation waits,
\begin{equation}
T_k^{\mathrm{val}}(t)
=
\sum_{p\in\mathcal P}
\nu_{k,p}^{(t)}c_p^{(t)}
T_p^{\mathrm{BC}}\bigl(r_p^{(t)}\bigr).
\label{eq:user_validation_wait}
\end{equation}

The physical service latency is
\begin{equation}
T_k^{\mathrm{phys}}(t)
=
\begin{cases}
T_{k,f}^{\mathrm L}(t)+T_k^{\mathrm{val}}(t)
+T_{k,f}^{\mathrm{DL}}(t),
& s_k^{(t)}=0,\\
T_{k,f}^{\ell}(t)+T_k^{\mathrm{val}}(t),
& s_k^{(t)}>0.
\end{cases}
\label{eq:e2e_latency}
\end{equation}
Only applicable delivery and validation components
are included; local execution does not eliminate
dependencies on shared state.

User-device energy is
\begin{equation}
E_k^{\mathrm U}(t)
=
E_k^{\mathrm{loc}}(t)
+E_k^{\mathrm{UL}}(t)
+E_k^{\mathrm{rx}}(t),
\label{eq:user_device_energy}
\end{equation}
 where $E_k^{\mathrm{rx}}(t)$ denotes user-device energy
for receiving downlink data; inactive components are zero.
Remote computation and downlink transmission energy
enter the operator cost

\section{Security-Constrained Problem Formulation}
\label{sec:security_problem}

\subsection{Security Analysis}
\label{subsec:security_analysis}

\subsubsection{Threat Model and Validation}
We assume authenticated, partially synchronous consensus
and secure cryptography. Compromised users and computing
nodes may submit false inputs; Byzantine validators may
equivocate or disrupt consensus.

\begin{assumption}
\label{ass:bounded_byzantine}
Each configuration contains at most $\bar z_p^{(t)}$
Byzantine validators; membership and this bound remain
fixed through view changes.
\end{assumption}

Correct validators accept update $u$ in state $S$ only if
\begin{equation}
\begin{aligned}
\mathsf{Valid}(S,u)
={}&\mathsf{Auth}(u)\land\mathsf{Permit}(S,u)\\
&\land\mathsf{Version}(S,u)\land\mathsf{Rule}(S,u),
\end{aligned}
\label{eq:application_validity}
\end{equation}
checking signatures, permissions, version/replay consistency,
and deterministic application rules. State transitions are
deterministic; false inputs passing validation remain
outside the guarantee.

\subsubsection{Consensus Configuration}
For $n=r_p^{(t)}$ and $b=\bar z_p^{(t)}$, the
certificate threshold is
\begin{equation}
q=n-b,\qquad n\ge3b+1.
\label{eq:adaptive_quorum_threshold}
\end{equation}
Messages bind stream, configuration, view, sequence,
and value. Correct validators retain protocol state
and do not equivocate. Preparation requires a primary
pre-prepare and $q-1$ matching distinct backup prepares.
Only prepared validators send commits; commitment
requires preparation and $q$ distinct matching commits.

New views collect $q$ reports and verifiably preserve
the highest-view prepared value per sequence above a
$q$-certified checkpoint, using nulls only where none
is reported. Reconfiguration requires an outgoing commit
certificate binding the terminal state and successor
configuration, plus $q$ state attestations. Outgoing
validators stop at the boundary; successors verify the
certificate chain and transferred state before extending
the history. Historical certificates remain unforgeable.

\begin{theorem}
\label{thm:adaptive_brf_safety}
Under Assumption~\ref{ass:bounded_byzantine} and these
conditions, adaptive BRF preserves agreement within
and across certified configurations.
\end{theorem}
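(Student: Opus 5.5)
The argument rests on a single quorum-intersection lemma, applied within a view, across views, and across configurations. For a configuration with $n=r_p^{(t)}\ge 3b+1$ validators, of which at most $b=\bar z_p^{(t)}$ are Byzantine by Assumption~\ref{ass:bounded_byzantine}, any two sets of size $q=n-b$ intersect in at least
\begin{equation*}
2q-n=n-2b\ge b+1
\end{equation*}
validators. The intersection therefore contains at least one correct validator. This bound depends only on $(n,b)$ of the current configuration, so it holds for every admissible $r_p^{(t)}\in\mathcal R_p^{(t)}$ in \eqref{eq:brf_action_set}. Deferral ($r_p^{(t)}=0$) creates no certificate and is vacuous. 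This is the reason adaptive BRF selection does not weaken safety.

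\textbf{Within a view and configuration.} Suppose two different values were prepared for the same stream, view and sequence. The two preparation sets, each consisting of the primary's pre-prepare together with $q-1$ backup prepares, are $q$-sets. They would share a correct validator that sent matching messages for two values, which is impossible because correct validators do not equivocate and messages bind stream, configuration, view, sequence and value. Commitment requires $q$ commits from prepared validators. Hence at most one value can be committed per (view, sequence).

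\textbf{Across views in one configuration.} Induct on the view number, with the following invariant: if value $v$ is committed at sequence $s$ in view $w$, then no other value can be prepared at $s$ in any view $w'>w$. The commit quorum consists of $q$ validators that are prepared on $v$. Any new-view message carries $q$ reports, and this report set intersects the commit quorum in a correct validator. That validator retained its protocol state and reports $v$ as prepared at some view $\ge w$. By the inductive hypothesis, any prepared value reported at a higher view is also $v$, so the highest-view prepared value is $v$. The new primary must verifiably preserve it, and nulls are used only where nothing is reported. Checkpoints are $q$-certified, so the same intersection argument shows that sequences below the checkpoint agree with committed history.

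\textbf{Across configurations.} This is the main obstacle. Consecutive configurations may have different $(n,b)$ and possibly disjoint memberships, so quorums cannot intersect across the boundary. I would instead chain certificates. Reconfiguration is itself a committed value in the outgoing configuration, so by the within-configuration result there is a unique terminal state and a unique successor. The $q$ state attestations contain a correct validator, and historical certificates are unforgeable, so the transferred state equals the committed prefix. Outgoing validators stop at the boundary, which means no conflicting commits can occur afterwards in the old configuration. Successors verify the certificate chain before extending the history, and each link is sound under its own configuration's bound (fixed per configuration by Assumption~\ref{ass:bounded_byzantine}). Induction over the configuration sequence then shows that all correct validators' committed histories are prefixes of one another, which is agreement.

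The delicate point is ruling out a forged chain produced by a stale configuration whose members may later exceed the bound. I would handle it with the stated unforgeability of historical certificates, together with the requirement that the bound stays fixed through the configuration's lifetime.
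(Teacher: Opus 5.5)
Your proof is correct and takes essentially the same route as the paper. The bound $2q-n=n-2b\ge b+1$ rules out conflicting preparations within a view, any $q$ view-change reporters must include a correct prepared committer of a committed value, and certified handoffs carry the committed history across configurations; you spell each step out in more detail than the paper's sketch. The one minor difference is the view-change step: the paper intersects the $q$ reporters directly with the $q-b$ correct prepared committers via $2q-b-n=n-3b\ge1$, whereas you intersect two $q$-sets and take a correct member, which gives the same count (note that only the correct members of your commit quorum are guaranteed prepared, and that is all your argument uses).
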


\begin{proof}
For signer sets $|Q_1|,|Q_2|\ge q$,
\begin{equation}
|Q_1\cap Q_2|\ge2q-n=n-2b\ge b+1,
\label{eq:adaptive_quorum_intersection}
\end{equation}
excluding conflicting preparations within a view.
Every set of $q$ view-change reporters intersects the
$q-b$ correct prepared committers of a committed value,
since $2q-b-n\ge1$. Verified selection preserves agreement
inductively across views; certified handoffs preserve
the committed history across configurations.
\end{proof}

Agreement guarantees neither input truth nor timely,
fresh delivery. Deferral starts no new attempt and
does not cancel ongoing consensus.

\subsection{Cost Model}
\label{subsec:cost_model}

Costs depend on the current offloading and BRF profiles;
all tradeoff weights are nonnegative and fixed.
\subsubsection{User-Oriented Cost}
Let $\mathcal P_{k,t}$ contain user $k$'s required
persistent updates. The indicator $F_{p,k,t}$ equals one when an attempted
delivery satisfies the configured Byzantine requirement,
downlink feasibility, and generation-to-delivery freshness
deadline, and zero otherwise. Define
\begin{align}
A_{k,t}
&=\frac{\sum_{p\in\mathcal P_{k,t}}A_p^{(t)}}
{\max\{|\mathcal P_{k,t}|,1\}},
\label{eq:user_mean_state_age}\\
Z_{k,t}
&=\mathds{1}\{\mathcal P_{k,t}\neq\varnothing\}
\prod_{p\in\mathcal P_{k,t}}F_{p,k,t},
\label{eq:complete_update_delivery}\\
T_k^{\mathrm{eff}}(t)
&=T_k^{\mathrm{phys}}(t)+(1-Z_{k,t})A_{k,t},
\label{eq:effective_service_latency}\\
v_k^{(t)}
&=\mathds{1}\{T_k^{\mathrm{eff}}(t)>T_k^{\max}\}.
\label{eq:sla_violation_indicator}
\end{align}
Here, $A_{k,t}$ is the mean required-state age,
$Z_{k,t}$ indicates successful delivery of all required
updates for a nonempty set, $T_k^{\mathrm{eff}}$ is
age-penalized service latency, and $v_k^{(t)}$ indicates
an effective-latency deadline violation. The added age
is a penalty, not physical delay; for an empty update
set, $A_{k,t}=Z_{k,t}=0$. User cost is
\begin{equation}
C_k^{\mathrm U}(t)
=w_TT_k^{\mathrm{eff}}(t)
+w_EE_k^{\mathrm U}(t)+w_AA_{k,t}.
\label{eq:current_user_cost}
\end{equation}

\noindent where $w_T$, $w_E$, and $w_A$ weight effective service
latency, user energy, and shared-state age, respectively.

\subsubsection{Commitment Cost}
For positive feasible $r$, blockchain energy and cost are
\begin{align}
E_p^{\mathrm{BC}}(r)
&=e_{\mathrm{com}}D_p^{\mathrm{BC}}(r)
+e_{\mathrm{cpu}}\sum_{h\in\mathcal H}C_p^h(r)
+e_{\mathrm{sto}}S_p^{\mathrm{BC}}(r),
\label{eq:blockchain_energy}\\
C_p^{\mathrm{BC}}(r)
&=w_{\mathrm{BC}}T_p^{\mathrm{BC}}(r)+E_p^{\mathrm{BC}}(r),
\label{eq:update_blockchain_cost}
\end{align}
where $w_{\mathrm{BC}}$ weights validation latency;
$e_{\mathrm{com}}$, $e_{\mathrm{cpu}}$, and
$e_{\mathrm{sto}}$ specify energy per communicated bit,
processing cycle, and newly stored bit, respectively.
The redundancy-margin penalty
\begin{equation}
\ell_{p,t}(r)=\frac{\chi_p}{1+r-r_{p,\min}^{(t)}}
\label{eq:update_headroom_loss}
\end{equation}
favors additional redundancy for critical updates with
diminishing reductions; it is not a failure probability.

With shared-delivery energy $E_{p,t}^{\mathrm{DL,sh}}$
and unsuccessful fresh-delivery count $N_{p,t}^{\mathrm{miss}}$,
the admitted-update cost is
\begin{equation}
\begin{aligned}
G_{p,t}^{+}(r)
={}&C_p^{\mathrm{BC}}(r)+w_{\mathrm L}\ell_{p,t}(r)\\
&+w_{\mathrm{DL}}E_{p,t}^{\mathrm{DL,sh}}
+w_{\mathrm V}N_{p,t}^{\mathrm{miss}}.
\end{aligned}
\label{eq:admitted_update_cost}
\end{equation}

\noindent where $w_{\mathrm L}$, $w_{\mathrm{DL}}$, and
$w_{\mathrm V}$ weight the redundancy-margin penalty,
downlink energy, and unsuccessful fresh deliveries,
respectively.

\subsubsection{Update-Deferral Cost}
For validity limit $A_p^{\max}=H_p\Delta t>0$, define
$a_{p,t}=A_p^{(t)}/A_p^{\max}$,
$n_{p,t}=N_p^{(t)}/K$, and
$\widehat\chi_p=\max\{\chi_p,0.1\}$.
Motivated by nonlinear age penalties~\cite{sun2019sampling}
and semantics-aware freshness~\cite{maatouk2023incorrect},
the deferral loss is
\begin{equation}
\begin{aligned}
L_p^{\mathrm{def}}(t)
={}&w_{\mathrm{def}}N_p^{(t)}
\widehat\chi_pD_p^{(t)}\\
&\times\left(1-e^{-\lambda_p a_{p,t}D_p^{(t)}
n_{p,t}\widehat\chi_p}\right).
\end{aligned}
\label{eq:current_deferral_loss}
\end{equation}
 where $w_{\mathrm{def}}\ge0$ scales the loss, and
$\lambda_p>0$ controls the sensitivity of the increasing,
saturating exposure score to age, magnitude, sharing,
and criticality. The exposure score is a modeling assumption, not
a calibrated failure probability.
The total deferral cost is

\begin{equation}
\begin{aligned}
G_{p,t}^{0}
={}&w_{a,1}a_{p,t}+w_{a,2}a_{p,t}^{2}
+w_dD_p^{(t)}+w_nn_{p,t}\\
&+w_i a_{p,t}D_p^{(t)}n_{p,t}\widehat\chi_p
+L_p^{\mathrm{def}}(t)\\
&+w_o\mathds{1}\{a_{p,t}\ge1\}.
\end{aligned}
\label{eq:complete_deferral_cost}
\end{equation}

\noindent where $w_{a,1}$ and $w_{a,2}$ weight linear and quadratic
age, while $w_d$, $w_n$, $w_i$, and $w_o$ weight
magnitude, affected-user fraction, the joint
age--magnitude--sharing--criticality term, and
obsolescence, respectively.

\subsubsection{System Operating Cost}
The shared-update cost is
\begin{equation}
C_{p,t}^{\mathrm{shared}}
=
\begin{cases}
G_{p,t}^{0}, & r_p^{(t)}=0,\\
G_{p,t}^{+}(r_p^{(t)}), & r_p^{(t)}>0.
\end{cases}
\label{eq:shared_update_cost}
\end{equation}
For operator computation and downlink energies
$E_t^{\mathrm C}$ and $E_t^{\mathrm{DL}}$,
the operator cost is
\begin{equation}
\begin{aligned}
C_t^{\mathrm O}
={}&w_{\mathrm C}E_t^{\mathrm C}
+w_{\mathrm{DL}}E_t^{\mathrm{DL}}
+w_{\mathrm V}\sum_{k\in\mathcal K_t}v_k^{(t)}\\
&+\sum_{p\in\mathcal P_t^{\mathrm{act}}}
C_{p,t}^{\mathrm{shared}}.
\end{aligned}
\label{eq:operator_operating_cost}
\end{equation}
\noindent where $w_{\mathrm C}$ weights computation energy,
while $w_{\mathrm{DL}}$ and $w_{\mathrm V}$ weight
downlink energy and deadline violations, respectively. Aggregating redundancy-margin penalties gives
\begin{equation}
L_t^{\mathrm{risk}}
=\sum_{\substack{p\in\mathcal P_t^{\mathrm{act}}\\r_p^{(t)}>0}}
\ell_{p,t}\bigl(r_p^{(t)}\bigr).
\label{eq:reward_modeled_loss}
\end{equation}
The per-slot system operating cost combines user costs,
operator cost, and the aggregate redundancy-margin penalty:
\begin{equation}
\begin{aligned}
C^{\mathrm{sys}}(t)
&=J_t(\mathbf s_t;\mathbf r_t)\\
&=\beta_{\mathrm U}\sum_{k\in\mathcal K_t}C_k^{\mathrm U}(t)
+\beta_{\mathrm O}C_t^{\mathrm O}
+\beta_{\mathrm R}L_t^{\mathrm{risk}}.
\end{aligned}
\label{eq:system_operating_cost}
\end{equation}

The weights $\beta_{\mathrm U}$, $\beta_{\mathrm O}$,
and $\beta_{\mathrm R}$ balance user cost, operator
cost, and redundancy-margin penalties, respectively.

\subsection{Problem Formulation}
\label{subsec:problem_formulation}

Offloading and blockchain commitment compete for limited resources, while update deferral affects subsequent shared-state age and service cost. Thus, we jointly optimize
offloading and BRF decisions over causal policies to
minimize the long-run average system cost:

\begin{align}
(P1):\quad
\min_{\{\boldsymbol{\mathcal S}_t,\mathbf r_t\}_{t\ge1}}
\quad &
\limsup_{T\rightarrow\infty}
\frac{1}{T}\sum_{t=1}^{T}
\mathbb E\!\left[C^{\mathrm{sys}}(t)\right]
\label{eq:joint_problem}\\
\mathrm{s.t.}\quad
&s_{k,m}^{(t)}\in\{0,1\},
\quad \forall k,m,t,
\tag{C1}\label{con:binary_offloading}\\
&\sum_{m\in\mathcal M\cup\{0\}}s_{k,m}^{(t)}=1,
\quad \forall k,t,
\tag{C2}\label{con:single_execution}\\
&s_{k,0}^{(t)}=1,
\quad \forall k\notin\mathcal K_t,\ \forall t,
\tag{C3}\label{con:idle_user}\\
&r_p^{(t)}\in\mathcal R_p^{(t)},
\quad \forall p,t,
\tag{C4}\label{con:brf_domain}\\
&\begin{aligned}
&\mathcal V_p^{(t)}\subseteq\mathcal V,\qquad
|\mathcal V_p^{(t)}|=r_p^{(t)},\\
&\qquad \forall t,\ \forall p:r_p^{(t)}>0,
\end{aligned}
\tag{C5}\label{con:committee_size}\\
&\sum_{p\in\mathcal P}w_{p,r_p^{(t)}}\le B_{\mathrm{BC}},
\quad \forall t,
\tag{C6}\label{con:blockchain_budget}\\
&\begin{aligned}
&T_{k,j}^{\mathrm{st}}(t)
\ge T_{k,i}^{\mathrm{fn}}(t)
+T_{k,i\rightarrow j}^{\mathrm{tr}}(t),\\
&\qquad
\forall(i,j)\in\mathcal E_{\mu_k^{(t)}},
\ \forall k\in\mathcal K_t,\ \forall t,
\end{aligned}
\tag{C7}\label{con:dag_precedence}\\
&0\le P_k^{\mathrm{DL}}(t)\le P_{\max}^{\mathrm{DL}},
\quad \forall k,t.
\tag{C8}\label{con:downlink_power}
\end{align}

Here, $w_{p,0}=0$ and
$w_{p,r}=D_p^{\mathrm{BC}}(r)+S_p^{\mathrm{BC}}(r)$
for $r>0$, with both volumes and $B_{\mathrm{BC}}$
expressed in bits. The budget accounts for communication
and newly replicated storage per slot.
Constraints~\eqref{con:binary_offloading}--\eqref{con:idle_user}
specify execution decisions;
\eqref{con:brf_domain}--\eqref{con:blockchain_budget}
enforce security-feasible admission, committee size,
and the blockchain budget.
Constraint~\eqref{con:dag_precedence} enforces task
dependencies, while \eqref{con:downlink_power} bounds
downlink transmit power.

Computing capacity affects execution and queueing
delays; the infrastructure experiments additionally
penalize overload rather than impose a hard per-slot
workload limit.
State transitions follow
\eqref{eq:aoi_evolution}--\eqref{eq:magnitude_evolution},
while service deadlines are penalized through
\eqref{eq:sla_violation_indicator}.

The following proposition characterizes the static
additive admission subproblem underlying
budget-constrained BRF selection.

\begin{proposition}
\label{prop:nphard}
The static additive update-admission problem associated
with $(P1)$ is NP-hard when positive integer resource
weights and nonnegative integer commitment savings
are independently specified inputs.
\end{proposition}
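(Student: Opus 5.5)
We prove NP-hardness by a polynomial-time reduction from the decision version of the 0--1 knapsack problem, which is NP-complete when item weights and values are given in binary. First we isolate the static additive admission subproblem of $(P1)$. Fix a single slot $t$, fix the offloading profile $\mathbf s_t$, and restrict each active update $p\in\mathcal P_t^{\mathrm{act}}$ to two options: deferral ($r_p^{(t)}=0$) or admission at one feasible redundancy level $r_p^{(t)}=r_{p,\min}^{(t)}$. Under this restriction the shared-update contribution to \eqref{eq:system_operating_cost} is additive over $p$, through \eqref{eq:shared_update_cost} and \eqref{eq:operator_operating_cost}, and so is the budget \eqref{con:blockchain_budget}. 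Admitting $p$ therefore changes the objective by $-v_p$, where the commitment saving is
\[
v_p=\beta_{\mathrm O}\bigl(G_{p,t}^{0}-G_{p,t}^{+}(r_{p,\min}^{(t)})\bigr)-\beta_{\mathrm R}\ell_{p,t}(r_{p,\min}^{(t)}),
\]
and it consumes weight $w_p=w_{p,r_{p,\min}^{(t)}}$. The static problem is then: choose $x\in\{0,1\}^{|\mathcal P|}$ to maximize $\sum_p v_px_p$ subject to $\sum_p w_px_p\le B_{\mathrm{BC}}$.

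The reduction itself is as follows. Given a knapsack instance with items $i=1,\dots,n$, positive integer weights $a_i$, nonnegative integer values $b_i$, capacity $B$, and target $V$, we create one persistent update $p_i$ per item. We set $w_{p_i}=a_i$, $v_{p_i}=b_i$, $B_{\mathrm{BC}}=B$, and take $\bar z_{p_i}^{(t)}=0$ and $r_{\max}\ge1$ so that admission is security-feasible. The proposition lets weights and savings be specified independently, so this identification is legitimate. A subset $S$ of items is then feasible for the knapsack exactly when admitting $\{p_i:i\in S\}$ satisfies \eqref{con:blockchain_budget}. Its value $\sum_{i\in S}b_i\ge V$ holds exactly when the admission saving is at least $V$. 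The construction clearly runs in polynomial time, since all numbers are copied directly.

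The main obstacle is whether ``independently specified'' is enough, or whether $w_p$ and $v_p$ must actually be realized by model parameters. If realization is required, we would argue as follows. For the weight, pick $W_p^{\mathrm{pp}}=a_i$ and set all other stage messages and storage to zero, with $r=1$. Then $D_p^{\mathrm{BC}}(1)+S_p^{\mathrm{BC}}(1)$ can be tuned to $a_i$; alternatively, use $W_p=a_i$ with the stage volumes zero. For the saving, the deferral cost $G_{p,t}^0$ is controlled by the weight $w_d$ and the magnitude $D_p^{(t)}$, and the admitted cost can be made constant by zeroing $w_{\mathrm{BC}}$ and the energy coefficients. Per-update criticality $\chi_p$ and magnitude $D_p^{(t)}$ then give enough freedom to set $v_{p_i}=b_i$, for example by choosing $D_p^{(t)}=b_i/\max_j b_j$ and scaling the weights. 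Rational scaling preserves the ordering of solutions, so this is harmless. Otherwise, we simply invoke the hypothesis as stated, and the reduction from knapsack completes the proof.
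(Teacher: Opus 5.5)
Your reduction is correct and is essentially the paper's: fix offloading, restrict each update to deferral or minimum-secure commitment, and identify budget weights, commitment savings, and capacity with the knapsack weights, values, and capacity. The paper encodes the savings by assigning commitment cost $c_p=1$ and deferral cost $c_p+v_p$, so that the cost equals a constant minus $\sum_p v_px_p$. One small slip in your optional realizability aside: at $r=1$ the pre-prepare volume $D_p^{\mathrm{pp}}(1)=(1-1)W_p^{\mathrm{pp}}$ vanishes. Only your alternative $S_p^{\mathrm{BC}}(1)=W_p=a_i$ (or $W_p^{\mathrm{rp}}=a_i$) actually realizes the weight.
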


\begin{proof}
Consider a $0$--$1$ knapsack instance with positive
integer weights $w_p$, values $v_p$, and capacity
$B_{\mathrm{BC}}$. Fix offloading and restrict each
update to deferral or minimum-secure commitment.
Let $x_p\in\{0,1\}$ indicate commitment, consuming
$w_px_p$ resources. Assign commitment cost $c_p=1$
and deferral cost $c_p+v_p$. Then
\begin{equation}
\begin{aligned}
C(\mathbf x)
&=\sum_p\bigl[x_pc_p+(1-x_p)(c_p+v_p)\bigr]\\
&=\sum_p(c_p+v_p)-\sum_pv_px_p.
\end{aligned}
\label{eq:admission_cost_decomposition}
\end{equation}
Thus, minimizing $C(\mathbf x)$ subject to
$\sum_pw_px_p\le B_{\mathrm{BC}}$ is equivalent to
maximizing $\sum_pv_px_p$ under the same constraint.
This polynomial-time reduction establishes NP-hardness
of the stated static additive admission problem.
\end{proof}

This knapsack structure motivates the budget-constrained
BRF selection developed next.

\section{Proposed Solution}
\label{sec:proposed_solution}

Problem $(P1)$ couples discrete offloading and redundancy
decisions through interference, consensus overhead, and
shared resources. Commitment consumes application-service
resources, whereas deferral affects subsequent freshness.
We decompose the decision process into sequential admission
and redundancy control and conditional slot-wise offloading.
The upper layer selects $\mathbf r_t$ using DDQN; the lower
layer computes $\widehat{\mathbf s}_t(\mathbf r_t)$ through
centralized sequential updates in an identical-interest game:
$\mathbf r_t
\longrightarrow\widehat{\mathbf s}_t(\mathbf r_t)
\longrightarrow(R_t,\mathbf x_{t+1})
\longrightarrow\mathbf r_{t+1}$.
The lower layer reduces the common operating cost,
while the upper layer learns admission and redundancy
decisions from a discounted reward balancing service
performance, resource expenditure, and update freshness.

\subsection{Multiuser Offloading Game}
\label{subsec:multiuser_offloading_game}

A user's offloading decision affects other users through
interference and shared resources. To account for these
effects, the centralized controller evaluates unilateral
changes using the system operating cost $J_t$ in
\eqref{eq:system_operating_cost}, defining an
identical-interest game
\cite{monderer1996potential,yamamoto2015comprehensive}.

During lower-layer optimization, the BRF profile and
exogenous slot state remain fixed, while interference,
downlink power, and costs are recomputed for each
candidate profile. Using the previously defined actions,
let $\mathcal S_{k,t}=\{0\}\cup\mathcal M$,
$\mathcal S_t=\prod_{k\in\mathcal K_t}\mathcal S_{k,t}$,
and $s_{k,t}\equiv s_k^{(t)}$.

\begin{definition}
\label{def:multiuser_offloading_game}
For fixed $\mathbf r_t$, the offloading game is
$\mathcal G_t(\mathbf r_t)
=\langle\mathcal K_t,\{\mathcal S_{k,t}\},
\{\widetilde C_{k,t}\}\rangle$, with common decision cost
\begin{equation}
\widetilde C_{k,t}(\mathbf s_t;\mathbf r_t)
=J_t(\mathbf s_t;\mathbf r_t),
\qquad k\in\mathcal K_t.
\label{eq:user_game_cost}
\end{equation}
\end{definition}

The individual user cost $C_k^{\mathrm U}$ contributes
to $J_t$, whereas $\widetilde C_{k,t}$ includes the
system-wide consequences of a strategy change.
The conditional offloading problem is
\begin{equation}
(P2):\quad
\min_{\mathbf s_t\in\mathcal S_t}
J_t(\mathbf s_t;\mathbf r_t).
\label{eq:slot_offloading_problem}
\end{equation}
Sequential best responses seek a coordinatewise minimum,
without guaranteeing global optimality.

The candidate uplink rate follows \eqref{eq:uplink_rate}.
Let $C_{k,t}^{\mathrm L}(0,\mathbf s_{-k,t};\mathbf r_t)$
denote user $k$'s cost under local execution.
For offloading through channel $m$, the corresponding
user cost is
\begin{equation}
\begin{aligned}
&C_{k,t}^{\mathrm O}(m,\mathbf s_{-k,t};\mathbf r_t)\\
&\quad=
\frac{\alpha_{k,t}}
{R_{k,m}^{\mathrm{UL}}(m,\mathbf s_{-k,t})}
+G_{k,t}^{\mathrm O}(m,\mathbf s_{-k,t};\mathbf r_t),
\end{aligned}
\label{eq:remote_decision_cost}
\end{equation}
where $\alpha_{k,t}=I_{\mu_k^{(t)}}(w_T+w_EP_k^{\mathrm{UL}})$,
and $G_{k,t}^{\mathrm O}$ contains the remaining
user-cost contributions.

For brevity, let
$C_{k,t}^{\mathrm L}
=C_{k,t}^{\mathrm L}(0,\mathbf s_{-k,t};\mathbf r_t)$,
$G_{k,t}^{\mathrm O}(m)
=G_{k,t}^{\mathrm O}(m,\mathbf s_{-k,t};\mathbf r_t)$,
and $R_{k,m}=R_{k,m}^{\mathrm{UL}}(m,\mathbf s_{-k,t})$.

\begin{proposition}
\label{thm:current_ordinal_potential}
For fixed exogenous slot state and BRF profile, suppose
$J_t$ is finite and deterministic on the finite, nonempty
strategy space $\mathcal S_t$. The identical-interest
game admits the exact potential
\begin{equation}
\begin{aligned}
\phi_t(\mathbf s_t;\mathbf r_t)
={}&\beta_{\mathrm U}\sum_{k\in\mathcal K_t}
\Bigg[
s_{k,0}^{(t)}C_{k,t}^{\mathrm L}\\
&+\sum_{m\in\mathcal M}s_{k,m}^{(t)}
\left(
\frac{\alpha_{k,t}}{R_{k,m}}
+G_{k,t}^{\mathrm O}(m)
\right)
\Bigg]\\
&+\beta_{\mathrm O}C_t^{\mathrm O}
+\beta_{\mathrm R}L_t^{\mathrm{risk}}\\
={}&J_t(\mathbf s_t;\mathbf r_t),
\end{aligned}
\label{eq:current_ordinal_potential}
\end{equation}
and has a pure Nash equilibrium.
\end{proposition}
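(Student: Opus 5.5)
The plan is in three steps. First, verify that the displayed expression really equals $J_t$. Second, invoke the standard fact that every identical-interest game is an exact potential game, with the common cost as potential \cite{monderer1996potential}. Third, use finiteness of $\mathcal S_t$ to get a pure Nash equilibrium as a minimizer of the potential.

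\textbf{Step 1 (identity $\phi_t=J_t$).} Expand $J_t$ via \eqref{eq:system_operating_cost}. By \eqref{con:single_execution}, for each active user $k$ exactly one of the indicators $s_{k,0}^{(t)}$ and $s_{k,m}^{(t)}$, $m\in\mathcal M$, equals one. Hence
\[
C_k^{\mathrm U}(t)=s_{k,0}^{(t)}C_{k,t}^{\mathrm L}+\sum_{m\in\mathcal M}s_{k,m}^{(t)}C_{k,t}^{\mathrm O}(m,\mathbf s_{-k,t};\mathbf r_t).
\]
By the decomposition \eqref{eq:remote_decision_cost}, the offloading branch equals $\alpha_{k,t}/R_{k,m}+G_{k,t}^{\mathrm O}(m)$. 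This is simply how $\alpha_{k,t}$ collects the uplink latency and energy terms $I_f(w_T+w_EP_k^{\mathrm{UL}})/R$, and $G_{k,t}^{\mathrm O}$ is defined as the remainder. Summing over $k$, weighting by $\beta_{\mathrm U}$, and adding the unchanged operator and risk terms gives $\phi_t=J_t$ pointwise on $\mathcal S_t$.

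The one point to state explicitly is that $C_{k,t}^{\mathrm L}$, $G_{k,t}^{\mathrm O}$ and $R_{k,m}$ are evaluated at the \emph{current} profile $\mathbf s_t$. They depend on $\mathbf s_{-k,t}$ and are recomputed, including the downlink power, for each candidate profile. So the identity is exact as functions of $\mathbf s_t$; it is not a linearization.

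\textbf{Step 2 (exact potential).} Fix $k$ and take any two profiles $(s_k,\mathbf s_{-k})$ and $(s_k',\mathbf s_{-k})$. By Definition~\ref{def:multiuser_offloading_game}, $\widetilde C_{k,t}=J_t=\phi_t$. Therefore
\[
\widetilde C_{k,t}(s_k',\mathbf s_{-k})-\widetilde C_{k,t}(s_k,\mathbf s_{-k})=\phi_t(s_k',\mathbf s_{-k})-\phi_t(s_k,\mathbf s_{-k}),
\]
which is the exact potential condition. Determinism and finiteness of $J_t$, with the exogenous slot state and $\mathbf r_t$ held fixed, ensure that these differences are well defined real numbers.

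\textbf{Step 3 (existence of a pure Nash equilibrium).} The space $\mathcal S_t=\prod_k(\{0\}\cup\mathcal M)$ is finite and nonempty. So $\phi_t$ attains a minimum at some $\mathbf s_t^\star$. If some user $k$ had a strictly improving deviation, then by Step 2 $\phi_t$ would strictly decrease, contradicting minimality. Hence $\mathbf s_t^\star$ is a pure Nash equilibrium.

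The same argument shows that any sequence of strict unilateral improvements strictly decreases $\phi_t$. Such a sequence therefore cannot revisit a profile and must terminate after finitely many steps, which links the result to the finite-termination claim.

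\textbf{Main obstacle.} The only delicate point is Step 1: making sure no cost component is double counted or omitted. This concerns the operator cost $C_t^{\mathrm O}$, the violation indicators, and the profile-dependent downlink power. The remedy is to define $G_{k,t}^{\mathrm O}$ residually as $C_k^{\mathrm U}$ minus $\alpha_{k,t}/R_{k,m}$ under offloading, which makes the identity hold by construction. The rest is the standard identical-interest argument.
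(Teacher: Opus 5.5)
Your proposal is correct and follows essentially the paper's own argument. The paper also writes $\phi_t(a)=\beta_{\mathrm U}C_{k,t}^{\mathrm U}(a)+\phi_{-k,t}(a)$, substitutes the local and remote cost forms, and concludes from $\widetilde C_{k,t}=J_t=\phi_t$ together with the existence of a minimizer on the finite $\mathcal S_t$; the only difference is that it spells out the three unilateral-change cases (remote-to-remote, local-to-remote, remote-to-local), which your identity-based Step~2 already covers.
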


\begin{proof}
Fix $t$, $\mathbf r_t$, and $\mathbf s_{-k,t}$,
and suppress these fixed arguments. For a candidate
action $a\in\mathcal S_{k,t}$, define
\begin{equation}
\begin{aligned}
\phi_{-k,t}(a)
={}&\beta_{\mathrm U}
\sum_{n\in\mathcal K_t\setminus\{k\}}C_{n,t}^{\mathrm U}(a)\\
&+\beta_{\mathrm O}C_t^{\mathrm O}(a)
+\beta_{\mathrm R}L_t^{\mathrm{risk}}(a).
\end{aligned}
\label{eq:other_cost_contribution}
\end{equation}
The subscript $-k$ excludes user $k$'s cost contribution,
but does not imply independence from its action. Hence,
\begin{equation}
\phi_t(a)
=\beta_{\mathrm U}C_{k,t}^{\mathrm U}(a)+\phi_{-k,t}(a).
\label{eq:current_potential_decomposition}
\end{equation}
Substituting the local and remote costs gives
\begin{align}
\phi_t(0)
&=\beta_{\mathrm U}C_{k,t}^{\mathrm L}
+\phi_{-k,t}(0),
\label{eq:local_explicit_potential}\\
\phi_t(m)
&=\beta_{\mathrm U}
\left[
\frac{\alpha_{k,t}}{R_{k,m}}
+G_{k,t}^{\mathrm O}(m)
\right]
+\phi_{-k,t}(m).
\label{eq:remote_explicit_potential}
\end{align}

\emph{Case 1: Remote-to-remote, $m\to m'$.}
For distinct $m,m'\in\mathcal M$, subtraction yields
\begin{equation}
\begin{aligned}
\phi_t(m')-\phi_t(m)
={}&\beta_{\mathrm U}\alpha_{k,t}
\left[
\frac{1}{R_{k,m'}}-\frac{1}{R_{k,m}}
\right]\\
&+\beta_{\mathrm U}
\left[
G_{k,t}^{\mathrm O}(m')-G_{k,t}^{\mathrm O}(m)
\right]\\
&+\phi_{-k,t}(m')-\phi_{-k,t}(m).
\end{aligned}
\label{eq:current_potential_case1}
\end{equation}

\emph{Case 2: Local-to-remote, $0\to m$.}
Subtracting \eqref{eq:local_explicit_potential} from
\eqref{eq:remote_explicit_potential} gives
\begin{equation}
\begin{aligned}
\phi_t(m)-\phi_t(0)
={}&\beta_{\mathrm U}
\left[
\frac{\alpha_{k,t}}{R_{k,m}}
+G_{k,t}^{\mathrm O}(m)-C_{k,t}^{\mathrm L}
\right]\\
&+\phi_{-k,t}(m)-\phi_{-k,t}(0).
\end{aligned}
\label{eq:current_potential_case2}
\end{equation}

\emph{Case 3: Remote-to-local, $m\to0$.}
Reversing Case~2 gives
\begin{equation}
\begin{aligned}
\phi_t(0)-\phi_t(m)
={}&\beta_{\mathrm U}
\left[
C_{k,t}^{\mathrm L}
-\frac{\alpha_{k,t}}{R_{k,m}}
-G_{k,t}^{\mathrm O}(m)
\right]\\
&+\phi_{-k,t}(0)-\phi_{-k,t}(m).
\end{aligned}
\label{eq:current_potential_case3}
\end{equation}

These expressions retain the effects on both the selected
user and the remaining system. Since
$\widetilde C_{k,t}=J_t=\phi_t$, every unilateral
change satisfies
\begin{equation}
\widetilde C_{k,t}(a)-\widetilde C_{k,t}(b)
=\phi_t(a)-\phi_t(b),
\label{eq:game_exact_potential_identity}
\end{equation}
establishing the exact-potential property.
A minimizer of $J_t$ exists on finite $\mathcal S_t$
and admits no improving unilateral deviation;
hence, it is a pure Nash equilibrium.
\end{proof}

\subsubsection{Sequential Best Response and Convergence}
\label{subsubsec:best_response_convergence}

The controller visits active users cyclically. At visit $q$,
user $k$'s candidate is
\begin{equation}
a_{k,t}^{(q)}
\in\arg\min_{a\in\mathcal S_{k,t}}
J_t(a,\mathbf s_{-k,t}^{(q)};\mathbf r_t).
\label{eq:asynchronous_best_response}
\end{equation}
It replaces the current action only if
\begin{equation}
\begin{aligned}
&J_t(\mathbf s_t^{(q)};\mathbf r_t)-
J_t(a_{k,t}^{(q)},\mathbf s_{-k,t}^{(q)};\mathbf r_t)
>\epsilon_{\mathrm{BR}},
\end{aligned}
\label{eq:best_response_improvement_tolerance}
\end{equation}
where $\epsilon_{\mathrm{BR}}\ge0$ is the improvement
tolerance; otherwise, the current action is retained.

\begin{theorem}
\label{thm:current_finite_convergence}
Under Proposition~\ref{thm:current_ordinal_potential},
at most $|\mathcal S_t|-1$ changes are accepted.
An unchanged complete sweep certifies an
$\epsilon_{\mathrm{BR}}$-Nash equilibrium, with a pure
Nash equilibrium when $\epsilon_{\mathrm{BR}}=0$.
\end{theorem}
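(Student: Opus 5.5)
The argument is a potential-function (strict-descent) argument built on Proposition~\ref{thm:current_ordinal_potential}. Since $\widetilde C_{k,t}=J_t=\phi_t$ for every $k\in\mathcal K_t$, each accepted change in the cyclic procedure \eqref{eq:asynchronous_best_response}--\eqref{eq:best_response_improvement_tolerance} strictly decreases $\phi_t$. More precisely, if the change at visit $q$ is accepted, then
\begin{equation*}
\phi_t(\mathbf s_t^{(q+1)};\mathbf r_t)<\phi_t(\mathbf s_t^{(q)};\mathbf r_t)-\epsilon_{\mathrm{BR}}\le\phi_t(\mathbf s_t^{(q)};\mathbf r_t).
\end{equation*}
A rejected candidate leaves the profile, and hence $\phi_t$, unchanged. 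So along the sequence of accepted changes the potential values are strictly decreasing. Because $J_t$ is finite and deterministic on the finite set $\mathcal S_t$, a profile can never be revisited after an accepted change.

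\emph{Counting step.} Let the distinct profiles reached after accepted changes, together with the initial profile, be $\mathbf s^{0},\mathbf s^{1},\ldots,\mathbf s^{L}$, where $L$ is the number of accepted changes. Strict monotonicity of $\phi_t$ along this chain forces all $L+1$ profiles to be pairwise distinct elements of $\mathcal S_t$. Hence $L+1\le|\mathcal S_t|$, which gives the bound $L\le|\mathcal S_t|-1$. Finiteness of $J_t$ matters here: it rules out ties or undefined values that could otherwise permit cycling. Determinism matters because it makes $\phi_t(\mathbf s)$ a single fixed number per profile. This step also shows that the procedure cannot accept changes forever. Every complete sweep that is not unchanged contains at least one accepted change, so after at most $|\mathcal S_t|-1$ changes some complete sweep must be unchanged.

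\emph{Certification step.} Consider a complete sweep over all $k\in\mathcal K_t$ in which no change is accepted. The profile $\mathbf s^\ast$ is then constant throughout the sweep, so each user $k$ is tested against the same $\mathbf s_{-k}^\ast$. Rejection of the best candidate $a_{k,t}$ means
\begin{equation*}
J_t(\mathbf s^\ast;\mathbf r_t)-\min_{a\in\mathcal S_{k,t}}J_t(a,\mathbf s_{-k}^\ast;\mathbf r_t)\le\epsilon_{\mathrm{BR}}.
\end{equation*}
Since every player's cost $\widetilde C_{k,t}$ equals $J_t$, this says that no unilateral deviation by any active user lowers that user's cost by more than $\epsilon_{\mathrm{BR}}$. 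By definition, $\mathbf s^\ast$ is therefore an $\epsilon_{\mathrm{BR}}$-Nash equilibrium. With $\epsilon_{\mathrm{BR}}=0$ the inequality becomes $J_t(\mathbf s^\ast)\le J_t(a,\mathbf s_{-k}^\ast)$ for all $k$ and $a$, which is a pure Nash equilibrium.

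\emph{Main subtlety.} I expect the delicate point to be the bookkeeping that the whole sweep is evaluated against a single fixed profile. This relies on the exogenous slot state and $\mathbf r_t$ being held fixed, with interference and downlink power recomputed deterministically per candidate profile as stated. If any of these quantities varied within a sweep, $J_t$ would not be a function of $\mathbf s_t$ alone, and both the descent and the certification steps would fail. I would therefore state explicitly that all recomputed quantities are functions of $(\mathbf s_t;\mathbf r_t)$ only, so that the hypotheses of Proposition~\ref{thm:current_ordinal_potential} apply.
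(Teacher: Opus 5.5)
Your proposal is correct and follows essentially the same route as the paper. Each accepted change lowers $\phi_t=J_t$ by more than $\epsilon_{\mathrm{BR}}$, so no profile recurs and $N_{\mathrm{acc}}\le|\mathcal S_t|-1$, and an unchanged sweep gives $J_t(\widehat{\mathbf s}_t;\mathbf r_t)\le J_t(a,\widehat{\mathbf s}_{-k,t};\mathbf r_t)+\epsilon_{\mathrm{BR}}$ for all $k,a$, which is the $\epsilon_{\mathrm{BR}}$-Nash condition because $\widetilde C_{k,t}=J_t$. Your extra observation that an unchanged sweep must eventually occur when there is no iteration cap is a valid refinement; note, though, that finiteness of $J_t$ is needed mainly to make the acceptance test well defined, not to exclude ties, since strict improvement cannot cycle on ties anyway.
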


\begin{proof}
Successive accepted profiles satisfy
\[
\phi_t(\mathbf u^{(j+1)};\mathbf r_t)
<
\phi_t(\mathbf u^{(j)};\mathbf r_t)-\epsilon_{\mathrm{BR}}.
\]
Thus, no profile recurs and
$N_{\mathrm{acc}}\le|\mathcal S_t|-1$.
An unchanged sweep verifies
\begin{equation}
\begin{aligned}
J_t(\widehat{\mathbf s}_t;\mathbf r_t)
\le{}&
J_t(a,\widehat{\mathbf s}_{-k,t};\mathbf r_t)
+\epsilon_{\mathrm{BR}},\\
&\forall k\in\mathcal K_t,\quad a\in\mathcal S_{k,t},
\end{aligned}
\label{eq:maximum_unilateral_improvement}
\end{equation}
which is the equilibrium condition since
$\widetilde C_{k,t}=J_t$.
\end{proof}

The result guarantees $\epsilon_{\mathrm{BR}}$-coordinatewise optimality, but not global optimality or convergence of the overarching hierarchical procedure. While an iteration cap ensures termination, it yields no equilibrium certificate. Note that computing-capacity limits influence the common cost via execution and queueing delays, with additional overload penalties applied in the infrastructure experiments.

\subsection{DDQN-Based Adaptive BRF Optimization}
\label{subsec:ddqn_brf_optimization}

The upper layer selects update admission and redundancy
using an aggregate observation $\mathbf x_t$. For each
selected BRF profile, lower-layer offloading determines
the service outcomes used to compute the reward and next
observation. The observation approximates the underlying
system state; exact Markov sufficiency is not assumed.

\subsubsection{Observation}

The observation summarizes demand, update urgency,
configured security requirements, and operating conditions
to guide commitment and redundancy decisions:
\begin{equation}
\begin{aligned}
\mathbf x_t=\Big[
&\mathbf N_t,\,
\{n_{p,t},a_{p,t}^{\mathrm{obs}},d_{p,t}^{\mathrm{obs}},
b_{p,t},m_{p,t},\rho_{p,t}\}_{p\in\mathcal P},\\
&\kappa_t,q_t,e_t
\Big].
\end{aligned}
\label{eq:compact_brf_state}
\end{equation}

Here, $\mathbf N_t=(N_{f,t})_{f\in\{0\}\cup\mathcal F}$,
with $N_{f,t}=K^{-1}\sum_{k\in\mathcal K}
\mathds{1}\{\mu_k^{(t)}=f\}$, gives request fractions
including inactivity ($f=0$).
Per-update features are affected-user fraction
$n_{p,t}=N_p^{(t)}/K$,
normalized age
$a_{p,t}^{\mathrm{obs}}
=\operatorname{clip}_{[0,2]}(A_p^{(t)}/A_p^{\max})$,
magnitude
$d_{p,t}^{\mathrm{obs}}
=\operatorname{clip}_{[0,1]}(D_p^{(t)})$,
fault bound
$b_{p,t}=\bar z_p^{(t)}/\max\{z_{\max},1\}$,
minimum redundancy
$m_{p,t}=r_{p,\min}^{(t)}/r_{\max}$,
and previous redundancy
$\rho_{p,t}=r_p^{(t-1)}/r_{\max}$,
where $z_{\max}$ is the configured fault-bound
normalization scale.

The remaining features are active-user fraction
$\kappa_t=|\mathcal K_t|/K$, normalized uplink quality
\[
q_t=\operatorname{clip}_{[0,1]}
\left[
\frac{1}{12K}\sum_{k\in\mathcal K}
\log_{10}\left(
1+\frac{P_k^{\mathrm{UL}}h_k^{\mathrm{UL}}(t)}
{\sigma_{\mathrm{UL}}^2}
\right)
\right],
\]
and normalized episode progress $e_t$ (elapsed steps
divided by the episode horizon). Throughout,
$\operatorname{clip}_{[a,b]}(x)=\min\{b,\max\{a,x\}\}$.

\subsubsection{Action}

Using the component set \eqref{eq:brf_action_set} and
resource weights defined after $(P1)$, the joint feasible
set is
\begin{equation}
\begin{aligned}
\mathcal A(\mathbf x_t)=
\Bigl\{\mathbf r:\;&r_p\in\mathcal R_p^{(t)},\ \forall p,\\
&\sum_{p\in\mathcal P}w_{p,r_p}\le B_{\mathrm{BC}}
\Bigr\}.
\end{aligned}
\label{eq:compact_joint_brf_action}
\end{equation}
Zero denotes deferral; positive actions provision
security-feasible commitment attempts.

\subsubsection{Reward}

Using the indicators in Section~\ref{subsec:cost_model},
let $\mathcal E_t=\{(p,k):r_p^{(t)}>0,\,
k\in\mathcal K_p^{(t)}\}$ be the attempted delivery set.
Secure-fresh delivery success is
$\eta_t^{\mathrm{ST}}
=\frac{\sum_{(p,k)\in\mathcal E_t}F_{p,k,t}}
{\max\{|\mathcal E_t|,1\}}$.
Across active users, SLA satisfaction and joint
secure-fresh-SLA success are
$\eta_t^{\mathrm{SLA}}
=\frac{\sum_{k\in\mathcal K_t}(1-v_k^{(t)})}
{\max\{|\mathcal K_t|,1\}}$
and
$\eta_t^{\mathrm J}
=\frac{\sum_{k\in\mathcal K_t}Z_{k,t}(1-v_k^{(t)})}
{\max\{|\mathcal K_t|,1\}}$, respectively.

For $C_t=J_t(\widehat{\mathbf s}_t(\mathbf r_t);\mathbf r_t)$,
normalized operating cost and budget utilization are
$\widetilde C_t=\operatorname{clip}_{[0,c_{\max}]}
(C_t/C_{\mathrm{ref}})$ and
$u_t^{\mathrm{BC}}=B_{\mathrm{used}}^{(t)}/B_{\mathrm{BC}}$,
where $C_{\mathrm{ref}}>0$ is fixed.
The deferred and obsolete-deferred fractions are
\begin{equation}
\begin{aligned}
d_t&=
\frac{\sum_{p\in\mathcal P_t^{\mathrm{act}}}
\mathds{1}\{r_p^{(t)}=0\}}
{\max\{|\mathcal P|,1\}},\\
o_t&=
\frac{\sum_{p\in\mathcal P_t^{\mathrm{act}}}
\mathds{1}\{r_p^{(t)}=0,\ A_p^{(t)}\ge A_p^{\max}\}}
{\max\{|\mathcal P|,1\}}.
\end{aligned}
\label{eq:reward_deferral_fractions}
\end{equation}
Both use the total persistent-function population.
The redundancy-margin penalty in
\eqref{eq:reward_modeled_loss} is normalized as
$\widetilde L_t=\operatorname{clip}_{[0,\ell_{\max}]}
\bigl(L_t^{\mathrm{risk}}/\max\{|\mathcal K_t|,1\}\bigr)$.

We group the weighted metrics into service incentives,
resource penalties, and update-management penalties:
\begin{equation}
\begin{aligned}
R_t^{\mathrm{svc}}
&=\vartheta_{\mathrm J}\eta_t^{\mathrm J}
+\vartheta_{\mathrm{ST}}\eta_t^{\mathrm{ST}}
+\vartheta_{\mathrm{SLA}}\eta_t^{\mathrm{SLA}},\\
P_t^{\mathrm{res}}
&=\vartheta_{\mathrm C}\widetilde C_t
+\vartheta_{\mathrm B}u_t^{\mathrm{BC}},\\
P_t^{\mathrm{upd}}
&=\vartheta_{\mathrm D}d_t+\vartheta_{\mathrm O}o_t
+\vartheta_{\mathrm L}\widetilde L_t.
\end{aligned}
\label{eq:reward_components}
\end{equation}
Using weighted scalarization
\cite{Roijers2013MultiObjective,Hayes2022PracticalMORL},
the clipped reward is
\begin{equation}
R_t=\operatorname{clip}_{[R_{\min},R_{\max}]}
\left(R_t^{\mathrm{svc}}-P_t^{\mathrm{res}}
-P_t^{\mathrm{upd}}\right).
\label{eq:compact_brf_reward}
\end{equation}
All weights are nonnegative and fixed during training.
Table~\ref{tab:notation_parameters} reports their values,
normalization scales, and clipping limits.

The upper-layer objective is
\begin{align}
(P3):\quad \max_{\pi}\quad&
\mathbb E_{\pi}\left[\sum_{t=0}^{\infty}\gamma^tR_t\right]
\label{eq:compact_upper_problem}\\
\mathrm{s.t.}\quad&
\mathbf r_t=\pi(\mathbf x_t)\in\mathcal A(\mathbf x_t),
\quad\forall t,
\nonumber
\end{align}
where $0\le\gamma<1$. Auxiliary incentives reweight
effects already included in operating cost. Together
with discounting and clipping, they make $(P3)$ a
surrogate rather than an equivalent reformulation of
$(P1)$. Action selection independently enforces configured
redundancy and budget constraints.

\subsubsection{Branch-Wise DDQN and Feasible Selection}

To address the exponential joint action space, we use
per-update action branches motivated by
\cite{tavakoli2018branching}, with the additive approximation

\begin{equation}
Q(\mathbf x,\mathbf r;\boldsymbol\theta)
\approx\sum_{p\in\mathcal P}
Q_p(\mathbf x,r_p;\boldsymbol\theta),
\label{eq:compact_branch_q}
\end{equation}
using $|\mathcal P|(r_{\max}+1)$ outputs without
assuming separability of the true action-value function.

Insecure positive actions are masked. For a budget
discretization step $\delta_B>0$, define
$\widehat w_{p,r}=\lceil w_{p,r}/\delta_B\rceil$
and $\widehat B=\lfloor B_{\mathrm{BC}}/\delta_B\rfloor$.
Dynamic programming solves the multiple-choice knapsack
problem
\begin{align}
\max_{\mathbf r}\quad&
\sum_{p\in\mathcal P}Q_p(\mathbf x_t,r_p;\boldsymbol\theta)
\label{eq:compact_knapsack}\\
\mathrm{s.t.}\quad&
r_p\in\mathcal R_p^{(t)},\quad \forall p,
\nonumber\\
&\sum_{p\in\mathcal P}\widehat w_{p,r_p}\le\widehat B.
\nonumber
\end{align}
Denote the selected action by
\begin{equation}
\mathbf r_t=
\mathcal K_{\mathrm{BRF}}
\bigl(\mathbf x_t,\mathbf Q(\mathbf x_t;\boldsymbol\theta)\bigr).
\label{eq:compact_knapsack_operator}
\end{equation}

\begin{proposition}[Feasible BRF Selection]
\label{prop:compact_brf_feasibility}
Every action returned by $\mathcal K_{\mathrm{BRF}}$
satisfies the configured redundancy requirements and
blockchain-resource budget.
\end{proposition}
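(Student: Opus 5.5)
The argument follows the structure of the operator $\mathcal K_{\mathrm{BRF}}$: it returns only points that the dynamic program of \eqref{eq:compact_knapsack} considers admissible. Two facts need to be established. First, every component lies in $\mathcal R_p^{(t)}$. Second, the discretized budget constraint implies the true constraint \eqref{con:blockchain_budget}. I would also note that the problem is always feasible, so the returned action is well defined. The all-deferral profile $\mathbf r=\mathbf 0$ has $0\in\mathcal R_p^{(t)}$ for every $p$ by \eqref{eq:brf_action_set}, and its weight is $\sum_p\widehat w_{p,0}=0\le\widehat B$, since $w_{p,0}=0$ and $B_{\mathrm{BC}}\ge0$. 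The dynamic program therefore has a nonempty feasible set and returns some maximizer.

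For the redundancy requirement, I would argue directly from the masking. The dynamic program enumerates for branch $p$ only the choices $r_p\in\mathcal R_p^{(t)}$, because insecure positive actions are masked before the recursion. Any returned component is therefore either $0$ (deferral) or an integer satisfying $3\bar z_p^{(t)}+1\le r_p\le r_{\max}\le|\mathcal V|$. This gives \eqref{con:brf_domain}, and together with $r_{\max}\le|\mathcal V|$ it also gives the existence of a committee $\mathcal V_p^{(t)}\subseteq\mathcal V$ of size $r_p$, which is \eqref{con:committee_size}. When the positive set is empty, only $0$ is enumerated, which matches the convention in the model.

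The main step is the budget, because the dynamic program enforces a rounded constraint rather than \eqref{con:blockchain_budget} itself. The rounding directions are what make this work. Since $\widehat w_{p,r}=\lceil w_{p,r}/\delta_B\rceil\ge w_{p,r}/\delta_B$ and $\widehat B=\lfloor B_{\mathrm{BC}}/\delta_B\rfloor\le B_{\mathrm{BC}}/\delta_B$, any returned $\mathbf r$ satisfies
\[
\sum_{p}w_{p,r_p}\le\delta_B\sum_p\widehat w_{p,r_p}\le\delta_B\widehat B\le B_{\mathrm{BC}}.
\]
The discretized problem is therefore a conservative inner approximation, and feasibility holds without any error term. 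It also holds regardless of the values of $Q_p$. The network outputs affect only which feasible point is chosen, not whether that point is feasible, and this is exactly the claim that the guarantee holds ``independently of learning.''

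One further point needs checking: the exploration and target-construction paths must route through the same operator. For example, $\epsilon$-greedy exploration could be applied by feeding random scores into $\mathcal K_{\mathrm{BRF}}$. Each such path must use the same masks and the same integer capacity so that the argument above covers it. I would state this as the standing assumption under which the proposition holds. Combining the two facts gives $\mathbf r_t\in\mathcal A(\mathbf x_t)$ as defined in \eqref{eq:compact_joint_brf_action}.
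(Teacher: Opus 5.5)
Your proposal is correct and follows essentially the paper's argument: masking ensures every positive component satisfies $r_p\ge 3\bar z_p^{(t)}+1$, and the same conservative-rounding chain $\sum_p w_{p,r_p}\le\delta_B\sum_p\widehat w_{p,r_p}\le\delta_B\widehat B\le B_{\mathrm{BC}}$ gives the budget. Your additions are valid refinements that the paper leaves implicit: the all-deferral profile $\mathbf r=\mathbf 0$ keeps the discretized problem nonempty, so $\mathcal K_{\mathrm{BRF}}$ is well defined, and $r_{\max}\le|\mathcal V|$ also yields the committee-size constraint.
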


\begin{proof}
Each positive selected action satisfies
$r_p\ge3\bar z_p^{(t)}+1$. By conservative rounding
and the knapsack constraint,
\[
\sum_p w_{p,r_p}
\le \delta_B\sum_p\widehat w_{p,r_p}
\le \delta_B\widehat B
\le B_{\mathrm{BC}}.
\]
\end{proof}

Conservative discretization can exclude feasible actions
but cannot introduce a budget violation.

Following DDQN~\cite{vanhasselt2016double}, the online
network selects the next action, while the target
network evaluates it. For replay transition
$(\mathbf x_t,\mathbf r_t,R_t,\mathbf x_{t+1},
d_t^{\mathrm{term}})$,
\begin{align}
\mathbf r_{t+1}^{\mathrm{sel}}
&=\mathcal K_{\mathrm{BRF}}
\bigl(\mathbf x_{t+1},
\mathbf Q(\mathbf x_{t+1};\boldsymbol\theta)\bigr),
\label{eq:compact_ddqn_selection}\\
Y_t
&=R_t+\gamma(1-d_t^{\mathrm{term}})
\sum_pQ_p\bigl(\mathbf x_{t+1},
r_{p,t+1}^{\mathrm{sel}};\boldsymbol\theta^-\bigr),
\label{eq:compact_ddqn_target}
\end{align}
where $\mathbf r_{t+1}^{\mathrm{sel}}$ is the feasible
action selected for target evaluation,
$r_{p,t+1}^{\mathrm{sel}}$ is its $p$th component,
and $d_t^{\mathrm{term}}$ indicates termination.
The online network minimizes the Huber loss
\begin{equation}
\mathcal L(\boldsymbol\theta)
=
\mathbb E\left[
\ell_{\mathrm H}\left(
Y_t-\sum_pQ_p(\mathbf x_t,r_p^{(t)};\boldsymbol\theta)
\right)\right].
\label{eq:compact_ddqn_loss}
\end{equation}
Exploration uses feasible actions, and $\boldsymbol\theta^-$
is updated periodically. Only true termination sets $d_t^{\mathrm{term}}=1$;
artificial training-horizon truncation retains
bootstrapping~\cite{pardo2018time}.

\subsubsection{BRF-Aware Downlink Power Allocation}
\label{subsubsec:downlink_power_allocation}

For each candidate offloading profile and selected BRF,
we minimize downlink transmit energy subject to a
reserved delivery-time target. The resulting allocation
enters the system-cost evaluation; the power associated
with the returned profile is applied to delivery.
For an output of $L_k^{\mathrm{DL}}$ bits, define
\begin{equation}
T_k^{\mathrm{preDL}}(t)
=
T_k^{\mathrm{UL}}(t)+T_k^q(t)
+T_k^{\mathrm{DAG}}(t)+T_k^{\mathrm{val}}(t),
\label{eq:predownlink_latency}
\end{equation}
with inapplicable terms omitted. The reserved
transmission time is
\begin{equation}
\tau_k^{\mathrm{DL}}(t)
=
\varphi_k[T_k^{\max}-T_k^{\mathrm{preDL}}(t)]^+,
\quad 0<\varphi_k\le1.
\label{eq:downlink_time_budget}
\end{equation}
Shared-update deliveries use their corresponding
payloads and generation-to-delivery validity deadlines.

For fixed bandwidth, channel gain, and interference,
the transmit-energy allocation solves
\begin{align}
(P_{\mathrm{DL}}):\quad
\min_{0\le P\le P_{\max}^{\mathrm{DL}}}\quad&
P\,T_k^{\mathrm{DL}}(P)
\label{eq:downlink_power_problem}\\
\mathrm{s.t.}\quad&
T_k^{\mathrm{DL}}(P)\le\tau_k^{\mathrm{DL}}(t).
\nonumber
\end{align}
For positive payload, bandwidth, and channel gain,
transmit energy is proportional to
$P/\ln(1+a_kP)$, where
$a_k=h_k^{\mathrm{DL}}(t)/
[I_k^{\mathrm{DL}}(t)+\sigma_{\mathrm{DL}}^2]>0$.
This expression increases with $P>0$, since
$\ln(1+x)>x/(1+x)$ for $x>0$.
Thus, whenever feasible, the optimum is
\begin{equation}
P_k^{\mathrm{req}}(t)
=
\frac{I_k^{\mathrm{DL}}(t)+\sigma_{\mathrm{DL}}^2}
{h_k^{\mathrm{DL}}(t)}
\left[
2^{\frac{L_k^{\mathrm{DL}}}
{B_k^{\mathrm{DL}}\tau_k^{\mathrm{DL}}(t)}}-1
\right].
\label{eq:optimal_downlink_power}
\end{equation}
The target is feasible when
$\tau_k^{\mathrm{DL}}(t)>0$ and
$P_k^{\mathrm{req}}(t)\le P_{\max}^{\mathrm{DL}}$.
Otherwise, capped-power delivery is used. The
allocated power and achieved transmit energy are
\begin{align}
P_k^{\mathrm{DL}}(t)
&=\min\{P_k^{\mathrm{req}}(t),P_{\max}^{\mathrm{DL}}\},
\label{eq:allocated_downlink_power}\\
E_k^{\mathrm{DL}}(t)
&=P_k^{\mathrm{DL}}(t)
T_k^{\mathrm{DL}}\bigl(P_k^{\mathrm{DL}}(t)\bigr).
\label{eq:optimal_downlink_energy}
\end{align}
For positive payloads, $P_k^{\mathrm{req}}=+\infty$
when $\tau_k^{\mathrm{DL}}(t)=0$; zero-payload
deliveries require no transmission energy.

Target feasibility differs from the service SLA
in \eqref{eq:sla_violation_indicator}.
Holding other quantities fixed, longer validation
reduces the remaining transmission time and increases
required power, coupling BRF selection to service cost.

\begin{algorithm}[t]
\caption{DDQN-RPBFT Training}
\label{alg:ddqn_rpbft}
\small
\begin{algorithmic}[1]
\State Initialize $\boldsymbol\theta$,
$\boldsymbol\theta^-=\boldsymbol\theta$, and replay buffer
$\mathcal D$.
\For{each training episode}
    \State Reset the environment and obtain $\mathbf x_1$.
    \For{each slot $t$}
        \State Select $\mathbf r_t$ by feasible exploration
        or \eqref{eq:compact_knapsack_operator}.
        \State Obtain $\widehat{\mathbf s}_t(\mathbf r_t)$
        by sequential strict-improvement updates,
        recomputing downlink power for each candidate.
        \State Execute the decisions and observe
        $R_t$, $\mathbf x_{t+1}$, and $d_t^{\mathrm{term}}$.
        \State Store
        $(\mathbf x_t,\mathbf r_t,R_t,\mathbf x_{t+1},
        d_t^{\mathrm{term}})$ in $\mathcal D$.
        \If{a replay mini-batch is available}
            \State Form targets using
            \eqref{eq:compact_ddqn_selection}--%
            \eqref{eq:compact_ddqn_target}.
            \State Update $\boldsymbol\theta$ using
            \eqref{eq:compact_ddqn_loss}.
        \EndIf
        \State Periodically set
        $\boldsymbol\theta^-\leftarrow\boldsymbol\theta$.
        \If{the episode ends}
            \State \textbf{break}
        \EndIf
    \EndFor
\EndFor
\State \textbf{Return:} learned BRF policy.
\end{algorithmic}
\end{algorithm}

\subsubsection{Computational Complexity}
\label{subsubsec:computational_complexity}

Let $P=|\mathcal P|$, $A=r_{\max}+1$, and
$\widehat B=\lfloor B_{\mathrm{BC}}/\delta_B\rfloor$.
Let $C_{\mathrm{NN}}$ and $C_{\mathrm{eval}}$ denote
the costs of a network forward pass and a candidate-profile
evaluation, including interference, DAG processing,
consensus, and downlink allocation.

Knapsack selection costs $\mathcal O(PA(\widehat B+1))$.
Each offloading sweep evaluates at most $K(M+1)$ profiles;
with at most $I_{\max}$ sweeps, per-slot inference costs
\begin{equation}
\mathcal O\!\left(
C_{\mathrm{NN}}+PA(\widehat B+1)
+I_{\max}K(M+1)C_{\mathrm{eval}}
\right).
\label{eq:algorithm_complexity}
\end{equation}
The network outputs $PA$ scores rather than enumerating
$A^P$ joint actions. Budget discretization yields a
pseudopolynomial DP bound; finer resolution increases
cost. The sweep cap bounds runtime but does not
certify equilibrium.

\section{Numerical Simulation}
\label{sec:numerical}

Table~\ref{tab:notation_parameters} summarizes the default
system, learning, and reward parameters. Experiment-specific
variations are stated with the corresponding results.

\begin{table}[t]
\centering
\caption{Simulation Parameters and Default Values}
\label{tab:notation_parameters}
\renewcommand{\arraystretch}{1.02}
\setlength{\tabcolsep}{3pt}
\footnotesize
\begin{tabularx}{\columnwidth}{l X l}
\hline
\textbf{Symbol} & \textbf{Definition} & \textbf{Value}\\
\hline
\multicolumn{3}{l}{\textit{System parameters}}\\
$K$ & XR users & $25$\\
$\lambda$ & Task-arrival probability & $0.5$\\
$\bar z$ & Byzantine-fault bound & $2$\\
$r_{\min}$ & Minimum feasible BRF & $3\bar z+1$\\
$r_{\max}$ & Maximum BRF & $15$\\
$B_{\mathrm{BC}}$ & Blockchain resource budget & $400$ Mbit\\
$H_f$ & Shared-state validity horizon & $3$ slots\\
$\Delta t$ & Slot duration & $1$ s\\
$T^{\max}$ & Service deadline & $3$ s\\
$F_k^{\mathrm L,\max}$ & Local computing capacity & $2.5$ GHz\\
$P_k^{\mathrm{UL}}$ & Uplink power & $0.2$ W\\
$B^{\mathrm{UL}}$ & Uplink bandwidth & $40$ MHz\\
$B^{\mathrm{DL}}$ & Downlink bandwidth & $20$ MHz\\
$P_{\max}^{\mathrm{DL}}$ & Maximum downlink power & $2$ W\\
$M$ & Uplink channels & $4$\\
$|\mathcal F|$ & XR task types & $4$\\
$F^{\mathrm{FIN}}$ & FIN capacity & $40$ Gcycles/s\\
$F^{\mathrm{EIN}}$ & EIN capacity & $80$ Gcycles/s\\
\hline
\multicolumn{3}{l}{\textit{Cost parameters}}\\
$w_T,w_E,w_A$
& Latency, energy, and age weights & $1,1,0.25$\\
$w_{\mathrm C},w_{\mathrm{DL}}$
& Computation and downlink weights & $1,1$\\
$w_{\mathrm{BC}},w_{\mathrm L},w_{\mathrm V}$
& Validation, margin, and failure weights & $1,1,25$\\
$w_{a,1},w_{a,2}$
& Linear and quadratic age weights & $5,25$\\
$w_d,w_n,w_i,w_o$
& Magnitude, sharing, interaction, obsolescence
& $8,3,20,75$\\
$w_{\mathrm{def}},\lambda_p$
& Deferral-loss scale and sensitivity & $12,4$\\
$\beta_{\mathrm U},\beta_{\mathrm O},\beta_{\mathrm R}$
& System-cost weights & $1,1,1$\\
\hline
\multicolumn{3}{l}{\textit{Learning parameters}}\\
$\eta$ & Learning rate & $10^{-4}$\\
$\gamma$ & Discount factor & $0.95$\\
$|\mathcal B|$ & Mini-batch size & $64$\\
$N_{\mathrm{buf}}$ & Replay capacity & $50{,}000$\\
$\epsilon$ & Exploration probability & $1.0\rightarrow0.05$\\
$N_{\mathrm{target}}$ & Target-update interval & $100$ steps\\
-- & Hidden-layer widths & $256$--$256$--$128$\\
$N_{\mathrm{ep}}$ & Training episodes & $500$\\
$T_{\mathrm{ep}}$ & Steps per episode & $50$\\
-- & Random seeds & $\{11,22,33\}$\\
\hline
\multicolumn{3}{l}{\textit{Reward parameters}}\\
$\vartheta_{\mathrm J}$ & Joint-service weight & $2.0$\\
$\vartheta_{\mathrm{ST}},\vartheta_{\mathrm{SLA}}$
& Marginal-service weights & $0.75,\ 0.75$\\
$\vartheta_{\mathrm C},\vartheta_{\mathrm B}$
& Cost and budget weights & $0.25,\ 0.25$\\
$\vartheta_{\mathrm D},\vartheta_{\mathrm O}$
& Deferral and obsolescence weights & $0.75,\ 0.50$\\
$\vartheta_{\mathrm L}$ & Redundancy-margin penalty weight & $0.25$\\
$C_{\mathrm{ref}}$ & Cost normalization scale & $100$\\
$c_{\max},\ell_{\max}$ & Normalized cost and loss caps & $20,\ 10$\\
$[R_{\min},R_{\max}]$ & Reward clipping interval & $[-10,10]$\\
\hline
\end{tabularx}
\end{table}

\subsection{Benchmarks and Ablation Variants}

We compare DDQN-RPBFT with controlled baselines and
ablations motivated by the cited studies.

\subsubsection{Controlled BRF Baselines}
\begin{itemize}
    \item \textbf{Minimum Secure} uses
    $r_p^{(t)}=3\bar z_p^{(t)}+1$ whenever feasible,
    representing minimum-overhead secure provisioning
    \cite{du2025secure}.

    \item \textbf{Excess Redundancy} selects the largest
    feasible secure BRF to contrast conservative
    provisioning with adaptive BFT
    \cite{leduc2025draftee}.

    \item \textbf{Random Feasible} samples a security-
    and budget-feasible BRF vector, motivated by random
    resource-allocation comparisons
    \cite{yu2022asynchronous}.

    \item \textbf{No Commit} fixes $\mathbf r_t=\mathbf 0$
    as a diagnostic reference without new commitments.
\end{itemize}

\subsubsection{Architectural Ablations}
\begin{itemize}
    \item \textbf{Without Feasibility Filtering (w/o-FF)} 
    removes learner-side security and budget filtering;
    an external safety mechanism handles infeasible
    proposals before execution.

    \item \textbf{Without Voluntary Deferral (w/o-VD)}
    disables voluntary deferral while retaining
    resource-forced deferral, isolating adaptive
    admission relative to secure-offloading
    operation~\cite{du2025secure}.
\end{itemize}

\subsubsection{Learning-Sensitivity Variants}
\begin{itemize}
    \item \textbf{Without Byzantine-State Information (w/o-BS)}
    removes Byzantine-related network inputs while
    retaining feasibility filtering, examining
    threat-state awareness~\cite{leduc2025draftee}.

    \item \textbf{Vanilla DQN} replaces DDQN with
    DQN~\cite{mnih2015human} under otherwise identical
    settings, providing a controlled learning-model
    comparison~\cite{yu2022asynchronous}.
\end{itemize}

\subsection{Convergence Analysis}
\begin{figure}[t]
    \centering
    \includegraphics[width=\columnwidth]{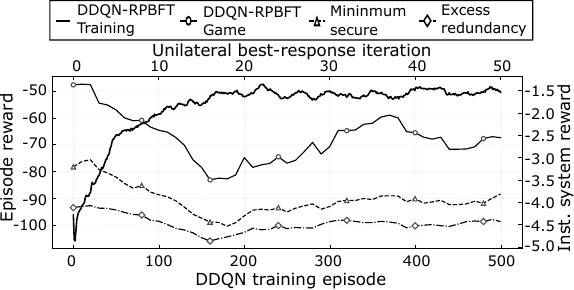}
    \caption{Convergence behavior of DDQN-RPBFT training and the lower-layer best-response game for $K=20$ and $H_f=3$.}
    \label{fig:ddqn_game_convergence}
\end{figure}
Fig.~\ref{fig:ddqn_game_convergence} shows the convergence behavior of DDQN-RPBFT for $K=20$ and $H_f=3$. The episodic return improves rapidly early in training and stabilizes near $-50$, indicating that the learned policy has reached a stable regime. In the lower-layer game, DDQN-RPBFT consistently achieves a higher instantaneous system reward than the Minimum Secure and Excess Redundancy baselines, particularly after the initial best-response iterations. This advantage arises from state-dependent BRF adaptation, which balances redundancy overhead against offloading performance more effectively than fixed-redundancy strategies. These results support both the training stability of DDQN-RPBFT and the benefit of adaptive redundancy during best-response optimization.

\subsection{Blockchain-Oriented Evaluation}

\textit{1) Byzantine Robustness and System Reward:} For policy $\pi$ and baseline $b$, relative reward
improvement is
\[
G_R(\pi;b)
=
100\frac{\bar R_\pi-\bar R_b}{|\bar R_b|},
\qquad \bar R_b\neq0,
\]
where the means use the same user loads. 

Fig.~\ref{fig:main_performance}(a) evaluates the impact of Byzantine
severity on system performance. DDQN-RPBFT consistently outperforms the
fixed and random redundancy schemes as the number of XR users increases.
Using the best-performing baseline, Minimum Secure, as a common
reference, DDQN-RPBFT improves the average reward by $46.7\%$, $33.2\%$,
and $32.4\%$ for $z=1$, $2$, and $3$, respectively. Although the reward
decreases with increasing $z$ and user load, the advantage of DDQN-RPBFT
is maintained and becomes more pronounced at higher loads. This is
because increasing Byzantine severity raises the minimum redundancy
requirement and its associated resource cost, whereas DDQN-RPBFT adapts
redundancy to the prevailing system state rather than relying on a fixed
redundancy level.

\textit{2) Secure timely-update performance:} Next, we evaluate secure timely-update performance through
\[
\bar\eta_\pi^{\mathrm{ST}}
=
\frac{1}{T}\sum_{t=1}^{T}\eta_{\pi,t}^{\mathrm{ST}},
\]
where $\eta_{\pi,t}^{\mathrm{ST}}$ is the fraction of
attempted deliveries satisfying the security,
delivery-feasibility, and freshness requirements.
Results are subsequently averaged across seeds. Thus, $\bar\eta_\pi^{\mathrm{ST}}$ measures the average
secure timely-delivery success among attempted
user-update deliveries under policy $\pi$.

\begin{figure*}[!t]
    \centering
    \includegraphics[
    width=0.9\textwidth,
    height=0.5\textheight,
    keepaspectratio]{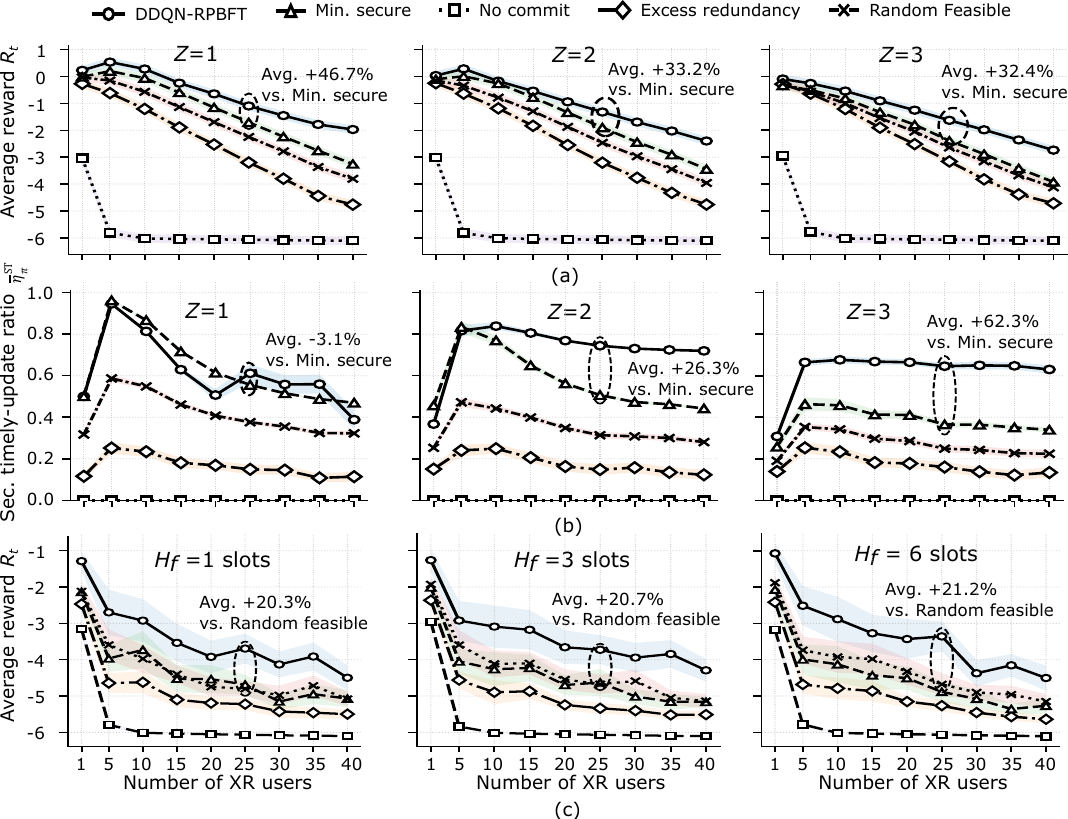}
    \caption{Performance of DDQN-RPBFT under security and shared-state
    validity variations: (a) Impact of Byzantine severity on average system reward (b) secure timely-update ratio under different
    Byzantine levels; and (c) Impact of shared-state validity horizon on average system reward. Shaded regions denote 95\%
    confidence intervals where shown.}
    \label{fig:main_performance}
\end{figure*}
Fig.~\ref{fig:main_performance}(b) demonstrates that adaptive redundancy becomes increasingly beneficial as Byzantine severity rises. Compared to Minimum Secure, DDQN-RPBFT changes the average secure timely-update ratio by $-3.1\%$, $+26.3\%$, and $+62.3\%$ for $z=1$, $2$, and $3$, respectively. At $z=1$, Minimum Secure performs slightly better due to lower PBFT requirements and minimal redundancy overhead. However, DDQN-RPBFT outperforms at higher user counts ($K=25$--$35$) through adaptive commit/deferral and BRF selection. As $z$ increases, fixed secure commitment becomes more costly, while DDQN-RPBFT efficiently coordinates commitment and redundancy to maintain a higher fraction of timely secure deliveries. The gain from $26.3\%$ at $z=2$ to $62.3\%$ at $z=3$ underscores the advantage of adaptation in stringent Byzantine conditions.

\subsection{User-Oriented Evaluation}

We first examine the effect of the shared-state validity horizon
$H_f$, which defines the number of slots an admitted
persistent update remains valid before a new secure update is needed.
A smaller $H_f$ leads to more frequent secure updates, while a
larger $H_f$ allows longer reuse of committed state.

Fig.~\ref{fig:main_performance}(c) shows DDQN-RPBFT consistently achieves the highest average reward as XR user count increases. Compared to the strongest baseline, Random Feasible, it yields average gains of $20.3\%$, $20.7\%$, and $21.2\%$ for $H_f=1$, $3$, and $6$. Reward declines with higher $K$ due to increased contention for resources, but DDQN-RPBFT's persistent advantage highlights its joint adaptation of update admission and redundancy across varying state-validity requirements.

\subsection{Infrastructure-Oriented Evaluation}

For execution location $\ell_{k,t}$ assigned to user $k$,
tier utilization is
$\rho_{\ell,t}
=\sum_{k\in\mathcal K_t:\ell_{k,t}=\ell}
C_{\mu_k^{(t)}}/(F_\ell^{\max}\Delta t)$,
where $\ell\in\{\mathrm{FIN},\mathrm{EIN}\}$.
In these experiments, the operator cost additionally
includes $w_q\sum_\ell T_{\ell,t}^{q}
+w_\rho\sum_\ell[\rho_{\ell,t}-1]^+$,
where $T_{\ell,t}^{q}$ is the modeled queueing delay,
and $w_q,w_\rho\ge0$ weight queueing delay and
excess utilization, respectively.
Saturation or overload is measured by the empirical
frequency of $\rho_{\ell,t}\ge1$.

The remote-task admission ratio is
\[
a_t^{\mathrm{adm}}
=
\frac{\sum_{k\in\mathcal K_t}
\mathds{1}\{s_k^{(t)}>0\}}
{|\mathcal K_t|},
\qquad |\mathcal K_t|>0.
\]
Admission and placement shares are averaged over
slots containing active users.
Fig.~\ref{fig:infrastructure_performance}(a)--(c)
examines adaptation to increasing task-arrival intensity.
The admission ratio decreases with $\lambda$ and $K$.
For $K=40$, increasing $\lambda$ from $0.2$ to $0.9$
reduces admission by approximately $67\%$.
FIN and EIN saturation-or-overload frequencies remain
negligible in most cases, increasing under the heaviest
tested load $(\lambda=0.9,K=40)$. These results
characterize adaptation to resource contention without
establishing queue stability.

For $\ell\in\{\mathrm L,\mathrm{FIN},\mathrm{EIN}\}$,
the placement share is
\[
s_t^\ell
=
\frac{\sum_{k\in\mathcal K_t}
\mathds{1}\{\ell_{k,t}=\ell\}}
{|\mathcal K_t|},
\qquad |\mathcal K_t|>0,
\]
averaged over slots containing active users.
Fig.~\ref{fig:infrastructure_performance}(d)--(e)
evaluates FIN/EIN computing capacities
$(F_{\mathrm{FIN}}^{\max},F_{\mathrm{EIN}}^{\max})
=(20,40)$ and $(80,160)$ Gcycles/s, respectively.
As $K$ increases, DDQN-RPBFT shifts more execution
to local processing. At $K=40$, higher capacities
reduce the local-placement share from $68\%$ to $61\%$,
corresponding to approximately $22\%$ more combined
FIN/EIN placement. Capacity-normalized system cost
increases with $K$, while normalized BRF remains
stable, indicating placement adaptation to available
computing resources.

\begin{figure*}[!t]
    \centering
    \includegraphics[width=\textwidth]
    {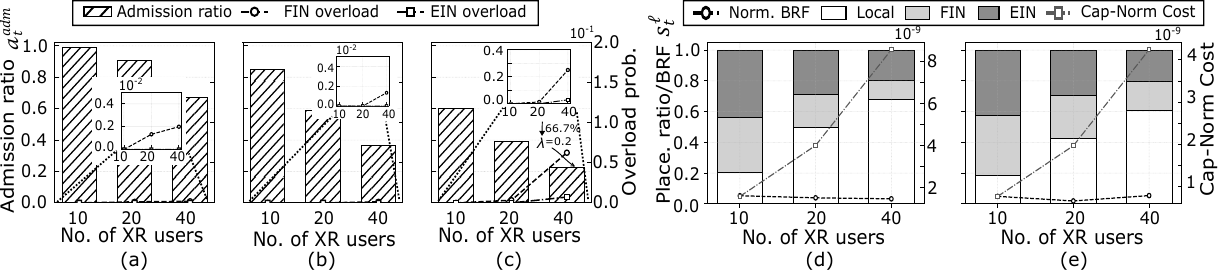}
    \caption{Infrastructure adaptation of DDQN-RPBFT:
(a)--(c) remote-task admission ratio and FIN/EIN
saturation-or-overload frequency under task-arrival
probabilities $\lambda=0.2$, $0.5$, and $0.9$,
respectively; and (d)--(e) computation placement,
normalized BRF, and capacity-normalized system cost
for $(F_{\mathrm{FIN}}^{\max},F_{\mathrm{EIN}}^{\max})
=(20,40)$ and $(80,160)$ Gcycles/s, respectively.}
    \label{fig:infrastructure_performance}
\end{figure*}

\subsection{Ablation and Sensitivity Analysis}

For each matched run at a given $(z,K)$ setting,
performance is averaged over the final 100 episodes.
Methods are compared using the Friedman
test~\cite{demsar2006statistical}; lower average
ranks indicate better performance.

Fig.~\ref{fig:d2_ablation}(a) shows that w/o-FF
and w/o-VD perform worse, supporting the benefits
of feasible-action filtering and adaptive deferral. Fig.~\ref{fig:d2_ablation}(b) shows that w/o-BS
achieves the lowest average ranks, while DDQN-RPBFT
and Vanilla DQN rank similarly. These results
suggest limited benefit from explicit Byzantine-state
inputs under the tested stationary conditions.
Adaptation to time-varying Byzantine conditions
remains for future work.

\begin{figure}[!t]
    \centering
    \includegraphics[width=\columnwidth]
    {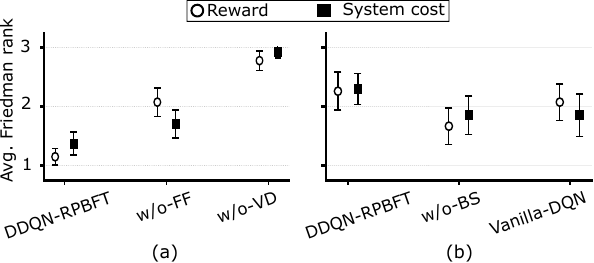}
    \caption{Friedman average-rank analysis of DDQN-RPBFT:
    (a) main architectural ablations and (b) learning-sensitivity
    variants. Lower average rank indicates better relative performance
    for the corresponding metric; error bars denote 95\% confidence
    intervals.}
    \label{fig:d2_ablation}
\end{figure}
\section{Related Work}
\label{sec:related_work}

Table~\ref{tab:related_comparison} summarizes the related
research directions. Our work relates to XR offloading,
where heterogeneous functions execute across user and
edge resources~\cite{alriksson2021xr}.
Yu \emph{et al.}~\cite{yu2024dependent} optimize
dependency-constrained XR offloading, while
Lia \emph{et al.}~\cite{lia2022in} address
delay-constrained in-network task placement.
Our earlier work~\cite{aliyu2023dynamic} combines game
theory and DDQN for partial offloading in COIN.
The present work additionally considers persistent-output
commitment and its competition with application execution.

Related to ours are multi-user XR architectures supporting
shared mapping, rendering, and state synchronization
\cite{chen2023networked,huang2024scaxr}.
Slocum \emph{et al.}~\cite{slocum2024sharedstate}
demonstrate how shared-state manipulation affects other
users' environments. These studies motivate protection
but do not jointly address Byzantine redundancy and
offloading under shared resource constraints.
We restrict commitment to persistent authoritative
updates, excluding transient intermediate outputs.

Our work also relates to BFT consensus. Byzantine
agreement and PBFT establish fault-tolerance conditions
under their protocol assumptions
\cite{lamport2019byzantine,castro1999pbft}.
LA-BFT uses reputation-based adaptive consensus
\cite{ye2024labft}, Stable BFT addresses unreliable
links~\cite{ling2024stable}, and DRAFTEE adapts validator
participation to throughput and security requirements
\cite{leduc2025draftee}. Kuznetsov and Tonkikh
\cite{kuznetsov2022reconfiguration} preserve consistency
during Byzantine reconfiguration. Assuming certified
configuration transitions, we examine how application
demand and update significance guide admission and
redundancy provisioning.

The closest studies coordinate blockchain and edge
computing resources. Fan \emph{et al.}
\cite{fan2024resource} match computation and consensus
resources; Fang \emph{et al.}~\cite{fang2024task}
optimize offloading and resource allocation.
Du \emph{et al.}~\cite{du2025secure} incorporate
improved PBFT into secure offloading, while
Chen \emph{et al.}~\cite{chen2025blockchain} jointly
optimize offloading and blockchain processing in
multi-UAV MEC. Our focus is the coupling of
persistent-update admission, semantic significance,
and security-feasible redundancy with XR execution
under shared resource constraints.

Finally, Chen \emph{et al.}~\cite{chen2015efficient}
model multiuser offloading under interference as a
potential game; our identical-interest game evaluates
unilateral changes through aggregate system cost.
Shielding enforces specified safety properties during
learning~\cite{alshiekh2018shielding}, and invalid-action
masking has been studied for policy-gradient methods
\cite{huang2022masking}. Our DDQN combines per-update
security masks with joint budget-constrained selection,
enforcing feasibility independently of learned rewards.

\section{Conclusion}
\label{sec:conclusion}

This paper presented semantic-aware update admission and adaptive Byzantine redundancy for secure computation offloading in COIN-enabled metaverse systems. The proposed hierarchical R-PBFT approach addresses
sequential update admission and redundancy control
using feasibility-filtered DDQN, and conditional
slot-wise offloading through an identical-interest game. Results show improved system performance, secure timely-update delivery, and infrastructure adaptation under diverse loads and Byzantine severity. Ablation studies confirm the value of feasible-action filtering and resource-aware deferral. Future work will address Byzantine-state-aware adaptation in non-stationary adversarial settings.

\section*{Acknowledgment}
The authors thank Sélinde Van Engelenburg, Cho-Rong Yu,
and Tai-Won Um for their valuable feedback and contributions
during the conceptualization and early development of this work.

\bibliographystyle{IEEEtran}
\bibliography{main}

\vspace{-0.5\baselineskip}
\begin{IEEEbiography}[{\includegraphics[width=0.8in,height=1in,clip,keepaspectratio]{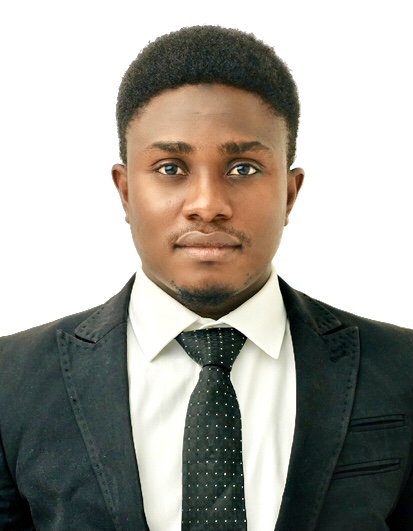}}]{Ibrahim Aliyu}
(Member, IEEE) received the B.Eng. and M.Eng. degrees in computer engineering from the Federal University of Technology, Minna, Nigeria, in 2014 and 2018, respectively, and the Ph.D. degree in computer science and engineering from Chonnam National University, Republic of Korea, in 2022. He is currently a Research Professor with the Intelligent Innovation G5-AICT Research Center, Chonnam National University, Republic of Korea. His research focuses on distributed and in-network computing, edge intelligence, digital twins, and resource orchestration for next-generation networks and cyber-physical systems. His research interests also include computation offloading, trustworthy artificial intelligence, federated learning, network security, and autonomous network management.
\end{IEEEbiography}

\vspace{-0.5\baselineskip}

\begin{IEEEbiography}[{\includegraphics[width=0.8in,height=1in,clip,keepaspectratio]{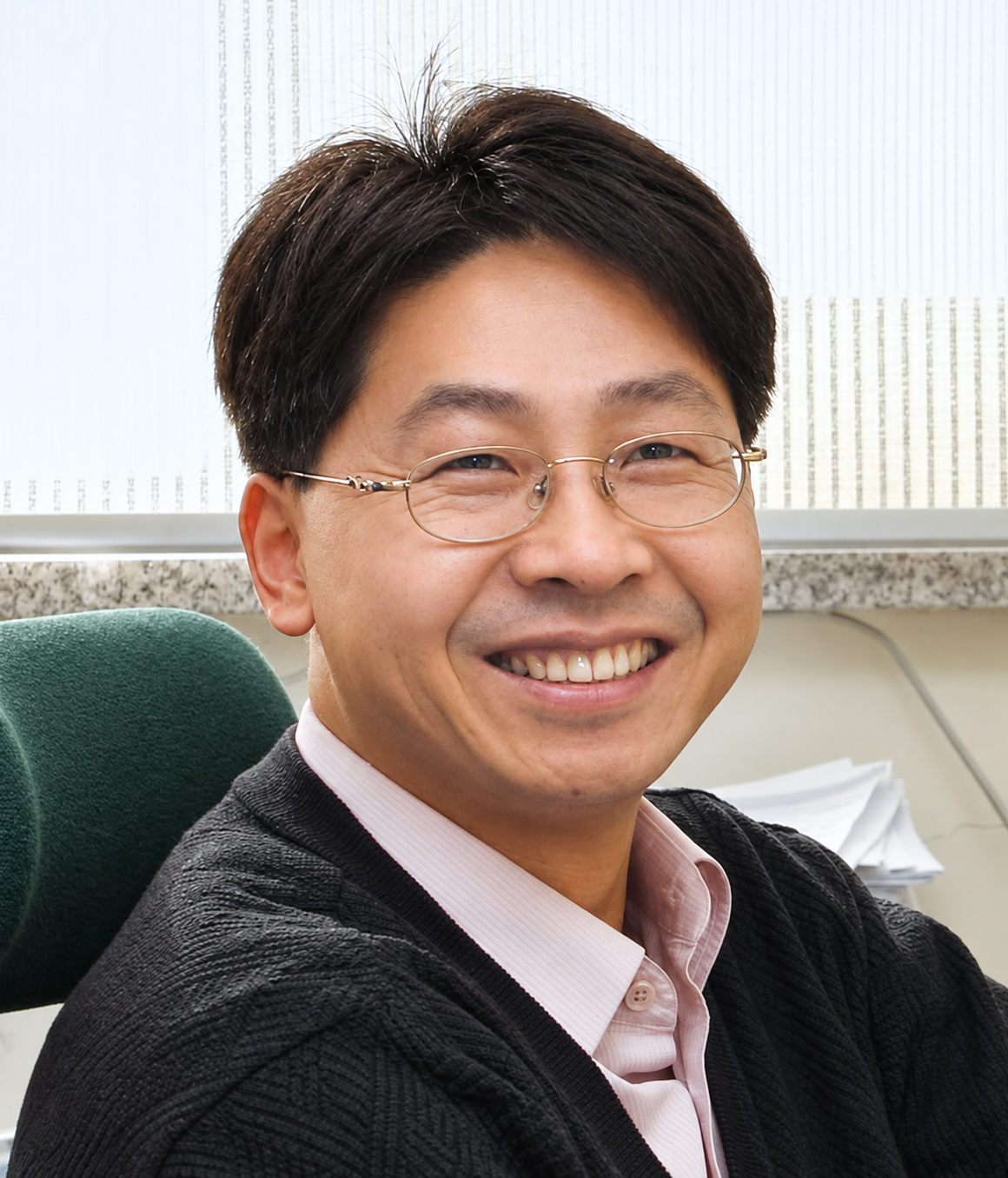}}]{Jaehyung Park}
received the B.S. degree in computer science from Yonsei
University, South Korea, in 1991, and the M.S. and Ph.D.
degrees in computer science from the Korea Advanced
Institute of Science and Technology (KAIST), South Korea,
in 1993 and 1997, respectively. He was with the Center
for Artificial Intelligence Research at KAIST from 1997
to 1998 and the Network Laboratory at ETRI from 1998
to 2002. Since 2002, he has been with Chonnam National
University, South Korea, where he is currently a
professor with the School of Electronics
and Computer Engineering. His research interests include
anomaly detection, attention and gating mechanisms,
digital twins, edge and mobile edge computing,
functional networks, heterogeneous graphs and networks,
and the Internet of Things.
\end{IEEEbiography}

\begin{IEEEbiography}[{\includegraphics[width=0.8in,height=1.in,clip,keepaspectratio]{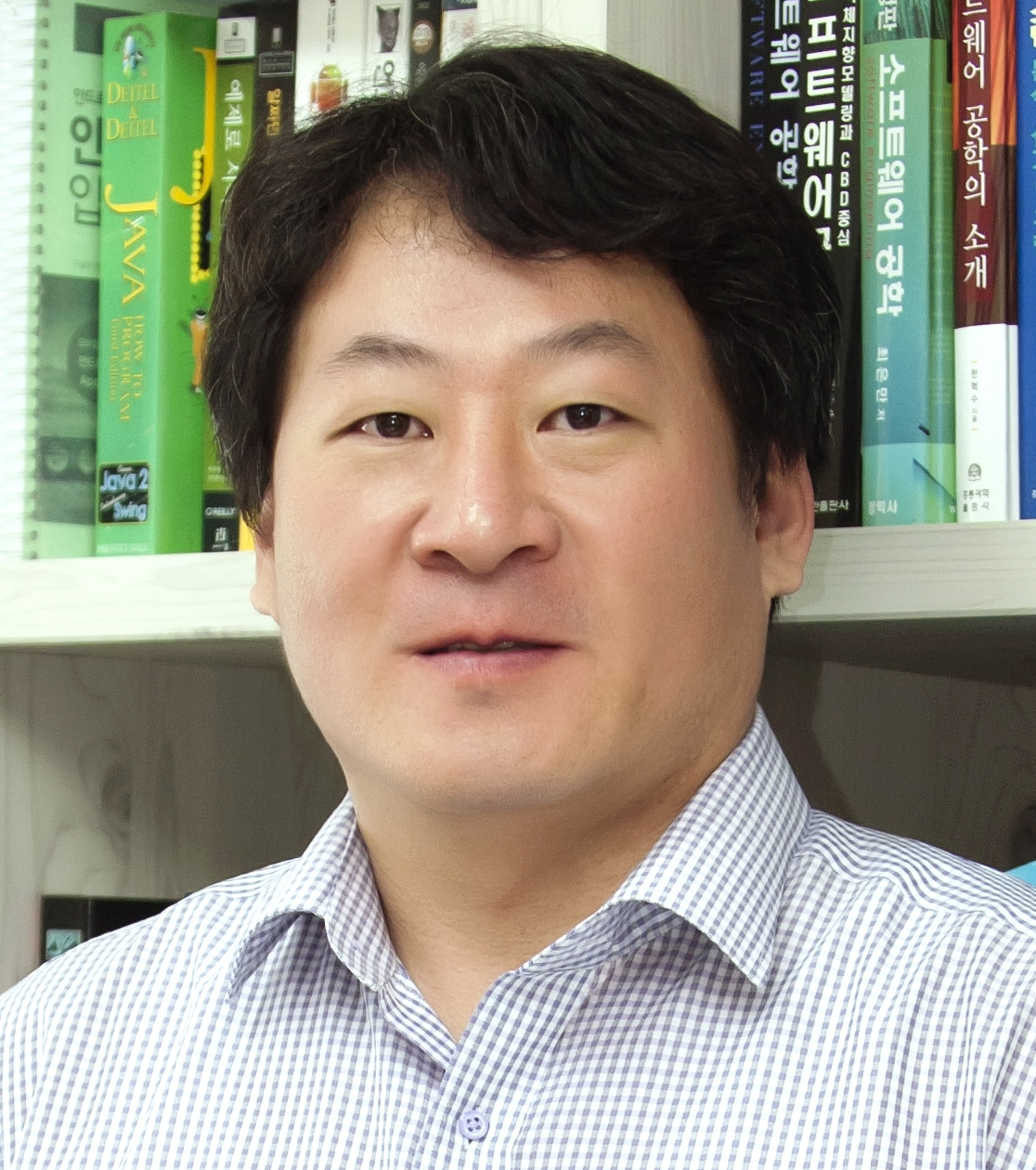}}]{Jinsul Kim}
(Member, IEEE) received a BS degree in computer science from the University of Utah at Salt Lake City in Utah, USA, in 1998 and MS (2004) and PhD (2008) degrees in digital media engineering from the Korea Advanced Institute of Science and Technology (KAIST) in Daejeon, South Korea. Previously, he worked as a researcher at the Broadcasting/Telecommunications Convergence Research Division, Electronics and Telecommunications Research Institute (ETRI) in Daejeon, Korea, from 2004 to 2009 and was a professor at Korea Nazarene University in Cheonan, Korea, from 2009 to 2011. He is a professor at Chonnam National University, Gwangju, Korea. He is a member of the Korean national delegation for ITU-T SG13 international standardization. He has participated in various national research projects and domestic and international standardization activities. He is a co-research director of the Artificial Intelligence Innovation Hub Research and Development Project hosted by Korea University and is the director of the G5-AICT research center.

\end{IEEEbiography}

\end{document}